\DeclareMathOperator{\E}{\mathbb{E}}
\newcommand{\indic}[1]{\mathbf{1}\left\{#1\right\}}
\newcommand{\probdist}{\mathbf{p}} 
\newcommand{\exposure}[2]{Q(#1,#2)}
\newcommand{\exposuresymbol}{Q}
\newcommand{\smoothexposuresymbol}{\mathcal{Q}}
\newcommand{\ZeroPercentComment}[1]{}
\newcommand{\potentialRemoval}[1]{}
\newcommand{\OldStory}[1]{}
\newcommand\coolrightbrace[2]{%
\left.\vphantom{\begin{matrix} #1 \end{matrix}}\right\}#2}
\newcommand\coolrightbracenvm[2]{ }
\newtheoremstyle{break}
  {}
  {}
  {\itshape}
  {}
  {\bfseries}
  {.}
  {\newline}
  {}
\theoremstyle{break}
\newtheorem{theorem}{Theorem}
\newtheorem{proposition}[theorem]{Proposition}
\newtheorem{lemma}[theorem]{Lemma}
 \theoremstyle{definition}
  \newtheorem{definition}[theorem]{Definition}
\theoremstyle{remark}
\icmltitlerunning{Statistical anonymity}
\begin{document}

\twocolumn[
\icmltitle{Statistical anonymity:\\Quantifying reidentification risks without reidentifying users}

\begin{icmlauthorlist}
\icmlauthor{Gecia \mbox{Bravo-Hermsdorff}}{comp}
\icmlauthor{Robert \mbox{Busa-Fekete}}{comp}
\icmlauthor{Lee \mbox{M.~Gunderson}}{yyy}
\icmlauthor{Andr\'es Mun\~oz Medina}{comp}
\icmlauthor{Umar Syed}{comp}
\end{icmlauthorlist}

\icmlaffiliation{yyy}{Gatsby Unit, University College London, UK}
\icmlaffiliation{comp}{Google Research, New York, US}

\icmlcorrespondingauthor{\textit{Gecia \mbox{Bravo-Hermsdorff}}\\}{gecia@google.com}

\icmlkeywords{privacy, data anonymization}

\vskip 0.3in
]

\printAffiliationsAndNotice{}  

\begin{abstract}
Data anonymization is an approach to privacy-preserving data release 
aimed at preventing participants reidentification, 
and it is an important alternative to differential privacy 
in applications that cannot tolerate noisy data. 
Existing algorithms for enforcing $k$-anonymity in the released data 
assume that the curator performing the anonymization 
has complete access to the original data. 
Reasons for limiting this access range from 
undesirability to complete infeasibility. 
This paper explores ideas 
--- objectives, metrics, protocols, and extensions --- 
for reducing the trust that must be placed in the curator, 
while still maintaining a statistical notion of $k$-anonymity. 
We suggest 
trust (amount of information provided to the curator) and
privacy (anonymity of the participants)
as the primary objectives of such a framework. 
We describe a class of protocols aimed at achieving these goals, 
proposing new metrics of privacy in the process, 
and proving related bounds. 
We conclude by discussing a natural extension of this work 
that completely removes the need for a central curator. 
\end{abstract}

\section{Releasing private data (Background)}
\label{sec:intro}

As the use of big data continues to permeate modern society, 
so does the sharing of our personal data 
with centralized \mbox{third-parties}. 
For example, the U.S.~Census Bureau 
shares aggregated population statistics with lawmakers \cite{census}, 
and hospitals share medical information with insurance companies \cite{crellin2011survey}. 
If unregulated, 
this type of information poses a threat to individual privacy. 
A trivial way to completely protect the privacy of individuals 
would be to simply not share any of their information, 
but such an absolutist approach is neither feasible nor useful.  
A sensible compromise is to develop methods that 
balance the usefulness of the data against the privacy lost by the individuals.  

Two common frameworks for privacy-preserving data release are: 
\emph{differential privacy}, i.e., DP 
(and its various extensions, e.g., R\'enyi differential privacy) 
and 
\emph{$k$-anonymity} 
(and its various extensions, e.g., \mbox{$t$-closeness}). 

\subsection{A quick (incomplete) summary of DP} 
\label{sec:SummaryDP}

In the central model of differential privacy \cite{dwork2006calibrating}, 
a trusted curator stores the database, 
and an analyst\footnote{Note that here the ``analyst'' and the ``public'' are the same entity since the data observed by the analyst could be seen by anyone else.} 
issues queries about the database to a curator, 
who returns noisy responses.  
Such an approach requires the users to 
trust the curator with the entirety of their private data.  
Several models have been proposed to relax this requirement.

In the local model, 
each user adds noise to their own data and 
responds to the analyst directly \cite{evfimievski2003limiting}. 
In the shuffle model, 
each user encrypts their noisy data 
(such that only the analyst may read them), 
and sends them to a trusted shuffler.  
The shuffler then randomly permutes these encrypted messages 
before forwarding them to the analyst \cite{cheu2019distributed}.

\subsection{A quick (incomplete) summary of $k$-anonymity} 
\label{sec:SummaryAnom}

A dataset satisfies \mbox{$k$-anonymity} if for every individual whose data is contained in the dataset, 
their data are indistinguishable from that of at least \mbox{$k-1$} other individuals (also presented in this dataset). 
Since \mbox{$k$-anonymity} was first introduced \cite{sweeneykanon}, 
efficient algorithms for anonymizing a database (while preserving the maximum amount of information possible) have received increasing interest. 
Local suppression algorithms aim to achieve this by redacting specific (feature, user) entries of the database  \cite{meyerson2004complexity}, 
while global suppression algorithms redact the same set of features for every user \cite{el2009globally}. 

\cite{meyerson2004complexity} showed that the problem of optimally anonymizing a database by either local and global suppression is \mbox{NP-hard}. 
In light of these results, 
several approximation algorithms have been proposed, particularly for local suppression \cite{aggarwal2005approximation, GKOULALASDIVANIS20144}. 
Similar to the central model of differential privacy, 
these algorithms/curators require access to the entire private data.  
Unlike differential privacy, 
variants of \mbox{$k$-anonymity} that reduce the 
trust that participants must place in the curator remain relatively unexplored.

\section{Why we focus on \mbox{$k$-anonymity} (Motivation)}
\label{sec:whykanon}

Differential Privacy (DP) \cite{dwork2014algorithmic} is a measure of privacy loss (typically denoted by $\varepsilon$) that holds true \emph{no matter what} 
(e.g., even if additional information is released in the future).  
As a result of this strong propriety, any DP algorithm must be stochastic (e.g., by adding noise to the data).  
This, however, can be undesirable in a variety of applications (see Section~\ref{sec:ExamplesUse} for examples). 

In contrast, while \mbox{$k$-anonymity} can be satisfied without adding noise to the data,  
its the privacy guarantee are contingent on the availability of auxiliary information 
(see \cite{narayanan2008robust} for a famous example involving Netflix).

\subsection{There is no panacea for private data}
\label{sec:kworth}

As differential privacy offers an upper bound on each instance of privacy loss that holds regardless of anything else, 
it has a simple composition rule that can be invoked without further assumptions.  
Perhaps for this reason, DP is currently the \emph{de facto} academic definition of privacy.  

As the issues surrounding privacy become increasingly pressing societal issues, 
it seems natural that the entities managing our private data would like to offer meaningful privacy guarantees.  
Unfortunately, despite being the ``gold standard'', DP is often touted with essentially meaningless parameters \cite{domingo2021limits}.  
For example, the US census of 2020 claims a ``mathematical algorithm to ensure that the privacy of individuals is sufficiently protected'' with a ``budget'' of \mbox{$\varepsilon=19.61$} \cite{WinNT}. 
Setting aside a conspicuous similarity with the natural logarithm of the US population,\footnote{
The US population in 2020 is estimated at $331$ million, and $\ln\!\big((331\pm1)\cdot10^6\big)\approx19.617\pm0.002$.} 
the guarantee being made is essentially meaningless:  
``your participation in the census will not change the likelihood of any outcome by more than a factor of $331$ million.'' 

Given this clear rift in communication between theory and practice, 
it is fruitful to also consider privacy notions that might have fewer ``translation'' issues, despite their technically ``weaker'' guarantees.

\subsection{Natural extensions of $k$-anonymity}
\label{sec:extensions}

For simplicity, consider the following setting: 
A database is to be released containing i.i.d. samples from the population, 
and the values can be split into two disjoint sets ``Quasi-Identifiers'' (QI) and ``Sensitive Attributes'' (SA).  
QI are not known to an adversary \emph{a priori}, but could be learned (for some cost) via exogenous means.  
SA are features that are not known to the adversary, cannot be learned exogenously, and would be detrimental(valuable) to the participant(adversary) if learned by the adversary.  

Many ``$scalar$-word'' anonymity measures can be classified by the assumptions they make on the sensitive attributes (SA).  
The use of \mbox{$k$-anonymity} assumes that all SA are completely incomparable, while \mbox{$l$-diversity} \cite{machanavajjhala2007diversity} allows for the possibility of identical SA (but is still blind to the magnitude of differences).  
Other metrics, such as \mbox{$t$-closeness}, 
\mbox{$\delta$-disclosure}, and \mbox{$\beta$-likeness}, allow for a more general similarity metric between different SA \cite{khan2021k}.  

The main goal of this paper is to understand the \mbox{trade-off} between anonymization guarantees to the participants and the trust they must place in the entity performing the anonymization. 
We believe that \mbox{$k$-anonymity} is a suitable notion to use as a \mbox{proof-of-concept} to introduce such statistical relaxation. 
Extending this framework to more nuanced measures of anonymity would be of considerable practical interest.

\subsection{Application examples}
\label{sec:ExamplesUse}

Essentially, we consider a  setting in which the private variables (the Sensitive Attributes) are incomparable 
(i.e., there is no metric of similarity) and unique (no two private variables are identical).
In such a setting, $k$-anonymity is equivalent to $l$-diversity, and extensions such as $t$-closeness and $\beta$-likeness do not make sense (as the SA have no notion of similarity). 

For example,  
consider a database containing \mbox{X-rays} images (SA),   
along with some (Quasi-Identifying) demographics of the patients.  
The latter could likely be obtained by an adversary with minimal effort, 
whereas the former is essentially impossible to directly measure  
(without explicit cooperation from the individual).   
Given the exposing nature of these SA, 
it is not a stretch to think about an adversary using them for their personal gain  
at the expense of the owner of the images. 
Moreover, the details of everyone's insides are rather unique. 

Another application is that of preventing \textit{browser fingerprinting} \cite{laperdrix2020browser}. 
Malicious websites engaged in browser fingerprinting query detailed information about a user's device  
(e.g., which fonts they have installed).  
If these details are sufficiently unique, they can be used to covertly track a user across different the web.  
While certain system details can be made less amenable to fingerprinting by adding noise to them 
(e.g., window size/resolution), 
the option of returning noisy responses is often not practical (e.g., uncertain browser type).  
Several browsers have proposed to prevent fingerprinting by ensuring that the information queried by a website is always \mbox{$k$-anonymous}, and blocking the query otherwise.  
However, the only way to completely guarantee \mbox{$k$-anonymity} is to grant a central curator access to the full data of every user.

\section{The big picture (What we did)}
\label{Sec:TheBigPicture}

In seeking a version of \mbox{$k$-anonymity} that does not require a \mbox{fully-trusted} curator, it appears necessary to allow for some fraction of the database that does not satisfy \mbox{$k$-anonymity}.\\  
It is therefore tempting to make the following analogy with differential privacy: \\
\[
\arraycolsep=2pt
\begin{array}{lcl}
\textbf{Central $\varepsilon$-DP} & \longrightarrow & \textbf{Local ($\varepsilon,\delta$)-DP} \quad\text{\tiny(with shuffler)}\\
 & \big\Downarrow & \\
\textbf{Central $k$-Anon} & \longrightarrow & \textbf{Local ($k,Q$)-Anon} \quad\text{\tiny(with shuffler)} \\
 & & \text{\small(\textit{this paper})} \\
\end{array}
\]\\
where the ``amount of privacy'' is quantified by $\varepsilon$ and $k$, and the ``error rates'' by $\delta$ and $Q$.  
Here, we propose several ideas related to the bottom right.

\subsection{Main contribution}
\label{sec:contribution}

In this paper, we analyze ideas for anonymizing a database, 
while only granting the curator access to a partial view of the database.  
In such a setting, to publish any data, one must balance: 
\vspace{-6pt}
\begin{enumerate}
    \item \emph{Trust}: amount of information provided to the curator. \vspace{-2pt}
    \item \emph{Privacy}: anonymity of the participants. 
\end{enumerate}
\vspace{-6pt}
As \mbox{$k$-anonymity} can no longer be strictly guaranteed for all users, 
we quantify privacy using the \emph{exposure}: a new statistical version of \mbox{$k$-anonymity}, defined as the expected fraction of users who are not \mbox{$k$-anonymous} in the published database. 
We then upper bound the exposure with high probability.

\section{Problem setting and notation}
\label{Sec:ProblemSetting}

A private database is represented by a matrix $M$ consisting of $n$ rows and $d$ columns. 
Each \emph{row} in the matrix corresponds to a \emph{user}, 
and each \emph{column} corresponds to a \emph{feature} (e.g., age or gender). 

$M_{i\hspace{-0.08em}j}$ denotes the value of feature $j$ for user $i$. 
For any subset of features \mbox{$J \subseteq [d]$}, $M_{\hspace{-0.08em}J}$ denotes the submatrix containing only columns $J$. 
\mbox{$V_j = \{M_{i\hspace{-0.08em}j} : i \in [n]\}$} denotes the set of possible values for column $j$, 
\mbox{$V_J = \bigotimes_{j\in J} V_j$} the set of possible joint values for columns $J$, 
and \mbox{$V = \bigotimes_{j\in [d]} V_j$} the set of all possible joint values.

A user in a database $M$ is \emph{\mbox{$k$-anonymous}} if 
there are at least $k$ rows in $M$ that are identical to that user's row (e.g., \mbox{$k=1$} implies that their row is unique).  
If every user is \mbox{$k$-anonymous}, we say that $M$ is \mbox{$k$-anonymous}.  
\emph{Local} suppression algorithms achieve this by redacting specific \emph{entries} of $M$, 
while \emph{global} suppression algorithms redact entire \emph{columns}. 
In this paper, we consider global suppression, 
though we posit analogues of local suppression in the discussion (Section~\ref{sec:LocallyRedactingEntries}).


\section{Anonymization protocol}
\label{sec:Protocol}

We proposed a \mbox{two-step} protocol similar in spirit to the shuffle model of differential privacy \cite{cheu2019distributed}.  

In the first round, the users send messages to the curator.  
The curator learns only the marginal distribution of individual features (i.e., nothing about their correlations).  
Using this information, the curator selects an appropriate set of features to be released to the analyst. 

In the second round, 
the users send messages to the analyst. 
The analyst learns the full joint distribution between those features selected by the curator.  

\subsection{The first step}
\label{ssec:protocol_first_step}

First, the curator and shuffler both create public/private key pairs, 
sending the public keys to the users 
(see Figure~\ref{Fig:IllustrationProtocol}). 
Each user uses these public keys to encode one message for each feature.  
As identical values would result in identical encoded messages, 
the users first concatenate their value with a random string.\footnote{Thereby adding some cryptographic ``salt'' \cite{park2001cryptographic} to the receipt, if you will.}
The users then send these encoded messages to the shuffler.

The shuffler randomly permutes these encrypted messages, 
decodes their portion of the encryption,
and sends them to the curator.

The curator receives the messages,
decodes them, and remove the salt. 
The curator then selects a subset of features that are safe to give to the analyst.  
Such a decision can be made by using,
for example, composition rules (Theorems~\ref{Thm:CompositionTheoremSupport} and~\ref{Thm:CompositionTheoremC}) 
or statistical modeling (Section~\ref{sec:StatisticalDatabaseSetting}).

\subsection{The second step}
\label{ssec:protocol_second_step}

Using the same shuffling mechanism, 
the users now communicate with the analyst. 
Each user creates a \emph{single} message, 
encoding the entire subset of features that have been deemed ``safe'' to release by the curator. 

\begin{figure*}[!ht] 
\begin{center}
    \includegraphics[ clip, trim={-3cm -0cm -0cm -0cm},width=1\textwidth]{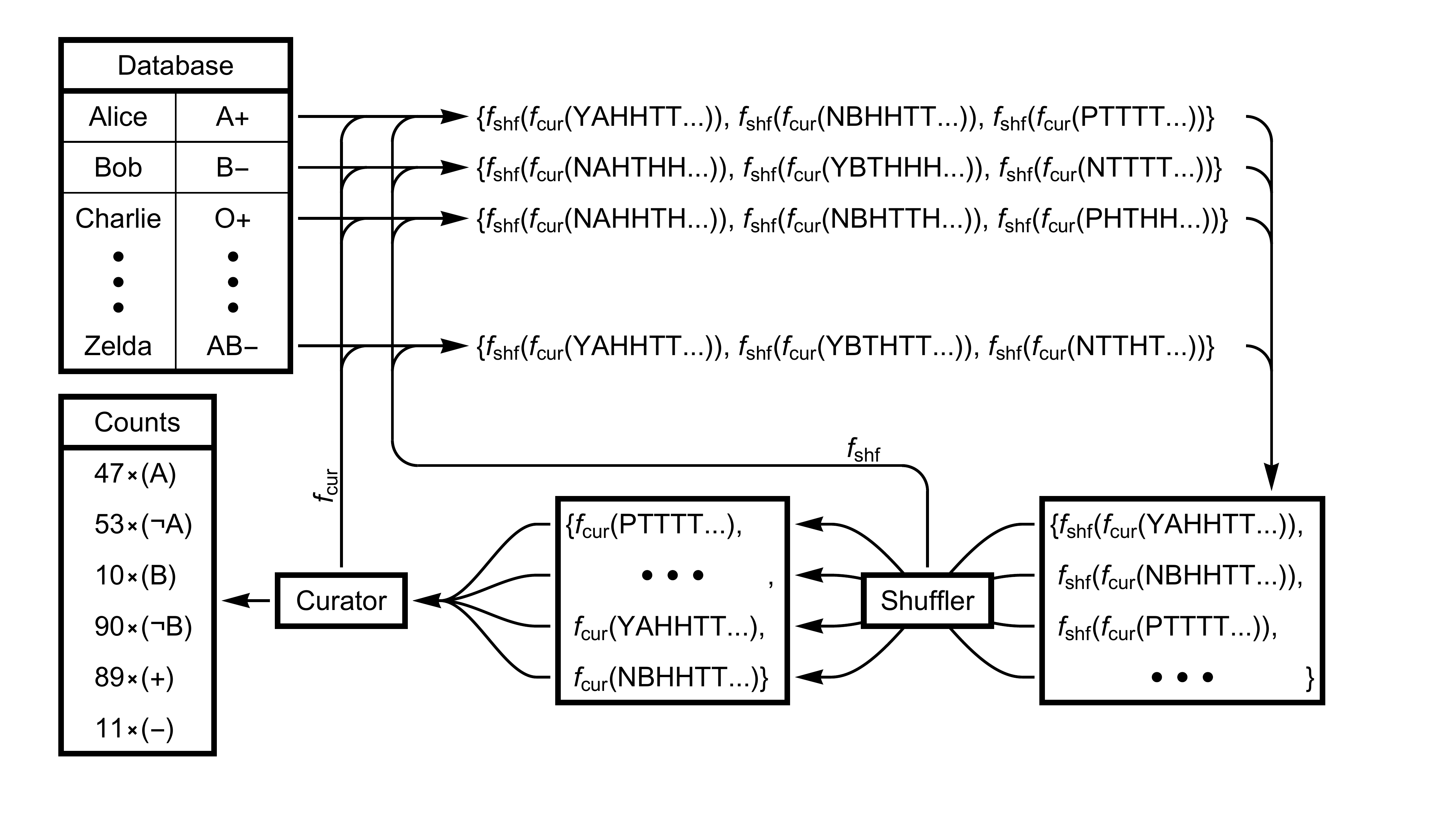}%
    \caption{
    \textbf{Anonymization protocol to obtain only the counts of individual features in a database.}\\
    Schema of how the curator can learn the counts of the individual features (i.e., the marginal distributions) without learning the full joint distribution (i.e., their correlations).  
    This protocol requires only the use of \mbox{public-key} cryptography as a primitive, and is example of a mix network \cite{sampigethaya2006survey}.\\
    For illustration, 
    we use a fictitious database of blood types.  
    A person's blood type (the joint variable) is determined by the value of 3 binary features (the marginal variables): 
    the presence or absence of antigens $A$, $B$, and $Rh(D)$ in their red blood cells. 
    As two examples, for Charlie's type of $O+$, the ``$O$'' corresponds to the absence of both the $A$ and $B$ antigens, 
    and the ``$+$'' to the presence of antigen $Rh(D)$, 
    whereas Zelda's type of $AB-$ corresponds to the presence of $A$ and $B$ and the absence of $Rh(D)$.  \\
    \textbf{\textbf{1.}} The protocol begins with the curator and the shuffler each generating their own public/private key pair, 
    and sending the public key ($f_{\textrm{\small{\vphantom{shf}cur}}}$ and $f_{\small{\vphantom{cur}\textrm{shf}}}$, respectively) to the participants for encrypting their data. \\
    \textbf{\textbf{2.}} Participants prepare each of their features by: 
    putting the value for the feature in a standardized form (e.g., YA, for presence of $A$, N for absence of $Rh(D)$, etc); adding a random suffix (e.g., HHTHT...);  
    then encrypting using first $f_{\textrm{\small{shf}}}$, 
    and then $f_{\textrm{\small{\vphantom{shf}cur}}}$. 
    The participant then send these messages (one for each feature) to the shuffler. \\
    \textbf{\textbf{3.}} The shuffler secretly shuffles all the messages, 
    then decrypts each of them using their own key, 
    and passes these (now singly encrypted) messages to the curator. \\
    \textbf{\textbf{4.}} The curator decrypts the message, 
    obtaining the counts for each feature. 
    }
    \label{Fig:IllustrationProtocol}
\end{center}
\vskip -0.2in
\end{figure*}


Within this setting, 
we analyze two situations: 
\vspace{-6pt}
\begin{enumerate}
    \item \emph{Fixed database} (Section~\ref{sec:FixedDatabaseSetting}): the \emph{same users} participate in the first and second step of the protocol. 
    \item \emph{Statistical database}  (Section~\ref{sec:StatisticalDatabaseSetting}): \emph{different users} participate in the first and second step of the protocol. 
\end{enumerate}

\section{Fixed database setting}
\label{sec:FixedDatabaseSetting}

We start by analyzing the simpler setting, 
where there is a fixed database, 
with the same set of users sending their individual entries to the curator
and their redacted rows to the analyst.  
Therefore, the curator decides which feature the users should redact when sending their data to the analyst based on complete knowledge of the marginal distribution of the features. 
The \mbox{X-ray} example we mentioned in Section~\ref{sec:ExamplesUse} is an application for this setting. 

\subsection{Exposure (new privacy measure)}
\label{sec:exposure}


\begin{definition}[\textbf{Exposure}]
\label{def:exposure}
For a given probability threshold \mbox{$t \in [0, 1]$}, 
the \emph{exposure $\exposuresymbol(t)$} is defined as the fraction of users that are less than \mbox{$(tn)$-anonymous} in matrix/database $M$: 
\begin{equation}
\exposuresymbol(t) = \sum_{v \in V} \probdist(v) \cdot \indic{\probdist(v) < t},
\end{equation}
where
\begin{equation}
\probdist(v) = \frac{|\{i \in [n] : M_{ij} = v_j \textrm{ for all } j \in [d]\}|}{n}
\end{equation}
is the empirical (observed in $M$) joint distribution of possible outputs $V$.  
\end{definition}

For subset of columns $J$ or a single column $j$, 
we define 
\[
\exposuresymbol_J(t), \probdist_J(v), \exposuresymbol_j(t), \probdist_j(v),
\]
as the restriction of the above definitions to those column(s).

When the database to be released to the analyst is different than the one observed by the curator (Section~\ref{sec:StatisticalDatabaseSetting}), 
we put a hat on quantities belonging to the latter, e.g., $\widehat{\exposuresymbol}_J$ and $\widehat{\probdist}_J$. 

A central question of our paper is: 
\vspace{-6pt}
\begin{quote}
    \emph{Given a subset of columns \mbox{$J \subseteq [d]$},\\
    empirical distribution $\probdist_j$ for each $j \in J$,\\
    and probability threshold $t \in [0, 1]$,\\
    estimate the value of $\exposuresymbol_J$(t).}
\end{quote}
In the next sections we answer this question in two parts:
\vspace{-6pt}
\begin{itemize}
    \item In Section~\ref{sec:composition}, we upper and lower bound\\
    $Q_J$, the exposure of a set of features, in terms of the\\ 
    $Q_j$'s, the exposure of the individual features \mbox{$j \in J$}. 
    \item In Section~\ref{sec:estimation}, we upper and lower bound\\
     $Q_j$, the exposure of an individual feature, in terms of\\ 
     $\widehat{\exposuresymbol}_j$, its observed exposure in another sample.
\end{itemize}

\subsection{Composition theorems (from marginals to joint)}
\label{sec:composition}

As a simple adversarial example to illustrate the hardness of relating the exposure of individual columns to the exposure of their joint, 
consider the following $(n+1)$-by-$2$ binary database $M$:
\begin{equation*}
M = 
\begin{bmatrix}
1 & 0 \\
1 & 0 \\
\vdots &\vdots \\
0 & 0 \\
0 & 1 \\
0 & 1 \\
\vdots &\vdots \\
\end{bmatrix}%
\begin{matrix}%
    \coolrightbrace{1 \\ 1\\ \vdots}{\frac{n}{2}}\\
    \coolrightbracenvm{0 }{ }\\
    \coolrightbrace{0 \\ 0\\ \vdots }{\frac{n}{2}}
\end{matrix}
\end{equation*}
    It is easy to see that this matrix is \mbox{$\frac{n}{2}$-anonymous} for column $1$ and column $2$ individually. 
However, 
notice that the ``middle'' user is unique, so the entire matrix $M$ is only \mbox{$1$-anonymous}.

This example shows that even when the anonymity of each column is high, 
there is no general strict guarantee that can be provided for the anonymity of all users. 
It also suggests that, 
when the anonymity of the individual columns is high, 
the fraction of users for which the anonymity is violated is small. 
The next theorems precisely quantify how large this fraction can be.  

First, we upper bound the exposure of the empirical joint distribution of features $\probdist_J$ in terms of 
the exposure and support size of each individual feature distribution $\probdist_j$. 
\vspace{4pt} 

\begin{theorem}[\textbf{Composition with known support sizes}] 
For any subset of columns \mbox{$J \subseteq [d]$} , 
probability thresholds \mbox{$\{t_j\}_{j \in J}$}, 
and any \mbox{$j^{\star} \in J$}:
\begin{align*}
    \exposuresymbol_J\left(\prod_{j \in J} t_j\right) &\leq \sum_{j  \in J} \exposuresymbol_j(t_j) + \sum_{j\in J \setminus \{j^\star_{ }\} } t_j|V_j| 
\end{align*}
\label{Thm:CompositionTheoremSupport}
\end{theorem}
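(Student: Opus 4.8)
The plan is to reduce the general statement to a two-coordinate composition bound and then apply it inductively, peeling off the features of $J \setminus \{j^\star\}$ one at a time while always keeping $j^\star$ inside the retained ``compound'' coordinate. Concretely, I would first prove a base inequality for a pair made of one compound coordinate $a$ (with joint value set $V_a$ and threshold $t_a$) and one single feature $b$, namely $\exposuresymbol_{a\cup b}(t_a t_b) \le \exposuresymbol_a(t_a) + \exposuresymbol_b(t_b) + t_b |V_b|$, where the ``pivot'' is placed on $a$. Keeping $j^\star$ inside the compound coordinate throughout is exactly what ensures the pivot never pays a $|V|$ term, which is why $j^\star$ is excluded from the correction sum.

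For the base bound I would write $\exposuresymbol_{a\cup b}(t_a t_b) = \sum_{(x,y):\,\probdist_{a\cup b}(x,y) < t_a t_b} \probdist_{a\cup b}(x,y)$ and split according to whether some marginal is already small. The ``caught'' part, where $\probdist_a(x) < t_a$ or $\probdist_b(y) < t_b$, is controlled by marginalizing out the other coordinate: since $\sum_y \probdist_{a\cup b}(x,y) = \probdist_a(x)$ and $\sum_x \probdist_{a\cup b}(x,y) = \probdist_b(y)$, dropping the joint-small restriction and union-bounding shows this part is at most $\exposuresymbol_a(t_a) + \exposuresymbol_b(t_b)$. The remaining part $B$ collects precisely the adversarial ``middle-user'' mass: pairs with both marginals large, $\probdist_a(x) \ge t_a$ and $\probdist_b(y) \ge t_b$, yet tiny joint $\probdist_{a\cup b}(x,y) < t_a t_b$. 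Here I bound each summand by $t_a t_b$ and count admissible pairs: there are at most $1/t_a$ values $x$ with $\probdist_a(x) \ge t_a$ (their masses sum to at most $1$) and at most $|V_b|$ values $y$, so $B < (1/t_a) \cdot |V_b| \cdot t_a t_b = t_b |V_b|$.

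With the base bound in hand I would induct on $|J|$. Picking any $j' \in J \setminus \{j^\star\}$, apply the two-coordinate bound with $a = J \setminus \{j'\}$ (so $a$ still contains $j^\star$), $b = j'$, and $t_a = \prod_{j \in J \setminus \{j'\}} t_j$, yielding $\exposuresymbol_J(\prod_{j \in J} t_j) \le \exposuresymbol_{J \setminus \{j'\}}(\prod_{j \ne j'} t_j) + \exposuresymbol_{j'}(t_{j'}) + t_{j'} |V_{j'}|$. Applying the induction hypothesis to $\exposuresymbol_{J \setminus \{j'\}}$ with the same pivot $j^\star$ and recombining, the marginal exposures accumulate to $\sum_{j \in J} \exposuresymbol_j(t_j)$ and the peeled corrections accumulate to $\sum_{j \in J \setminus \{j^\star\}} t_j |V_j|$, which is the claim; the base case $J = \{j^\star\}$ is trivial.

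The main obstacle is controlling $B$, the mass on values that are individually common in every feature but jointly rare — exactly the phenomenon the $(n+1)\times 2$ example is built to exhibit. The subtlety worth flagging is \emph{why} the induction peels one feature at a time rather than counting all high-marginal tuples simultaneously: a direct multiway count would bound $B$ by $\prod_{j \ne j^\star}(t_j |V_j|)$, a product that can far exceed the claimed additive correction $\sum_{j \ne j^\star} t_j |V_j|$. Peeling keeps each count two-dimensional (the factor $1/t_a$ against the factor $|V_b|$), so the corrections add rather than multiply.
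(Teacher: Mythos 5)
Your proof is correct, and its skeleton---prove an asymmetric two-coordinate bound $\exposuresymbol_{a\cup b}(t_at_b)\le\exposuresymbol_a(t_a)+\exposuresymbol_b(t_b)+t_b|V_b|$, then peel features one at a time while keeping $j^\star$ inside the compound coordinate---is exactly the paper's induction. What is genuinely different is how you prove the two-coordinate bound. The paper lower-bounds $1-\exposuresymbol_J(q_1q_2)$ via indicators on the conditional probability $\probdist(v_1,v_2)/\probdist(v_1)$, majorizes $\indic{\probdist(v_1,v_2)/\probdist(v_1)<q_2}$ by the concave surrogate $G_{q_2}(p)=\min\big(1,\,q_2(1-p)/(p(1-q_2))\big)$, and collapses the sum over $v_1$ with a Jensen-type log-sum inequality (its Lemmas~\ref{lemma:logsum} and~\ref{lemma:hq}), arriving at $\sum_{v_2}\probdist(v_2)G_{q_2}(\probdist(v_2))\le\exposuresymbol_2(q_2)+q_2|V_2|$. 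You instead decompose the jointly-rare mass directly: the part where some marginal is already below its threshold is union-bounded by $\exposuresymbol_a(t_a)+\exposuresymbol_b(t_b)$, and the ``marginally common but jointly rare'' mass $B$ is handled by pure counting---at most $1/t_a$ values of $x$ with $\probdist_a(x)\ge t_a$ times at most $|V_b|$ values of $y$, each pair carrying mass less than $t_at_b$, so $B\le t_b|V_b|$. Your route is more elementary (no conditional probabilities, no auxiliary concave function, no Jensen) and makes transparent exactly which mass the slack term pays for; the paper's route produces, as an intermediate step, a sharper bound in terms of conditional anonymity (the mass with $\probdist(v_1,v_2)/\probdist(v_1)<q_2$), but then relaxes it to the same final inequality, so nothing is lost by your shortcut. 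Two small points: dispose of the degenerate case $t_a=0$ (where the left-hand side vanishes) before invoking the count $1/t_a$; and your closing remark is slightly overstated---the direct multiway count is also a valid bound, with slack $\prod_{j\ne j^\star}(t_j|V_j|)$, which can indeed far exceed the additive slack when the factors exceed one, though it can also beat it when all factors are below one; it simply does not yield the additive form stated in the theorem.
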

This bound is particularly useful when the support size of each column is small. 
When their support size are large or unknown, 
its usefulness deteriorates. 
The next theorem replaces the support size with a free parameter $c$ that can be optimized by the curator to decide which columns should be redacted when doing the global suppression.
\vspace{4pt} 

\begin{theorem}[\textbf{General composition rule}] 
For any subset of columns $J \subseteq [d]$, 
probability thresholds $\{t_j\}_{j \in J}$, 
and free parameter $c \in (0, 1)$:
\begin{align*}
    \exposuresymbol_J\!\left(c\prod_{j \in J} t_j\right) \leq  &\sum_{j \in J} \exposuresymbol_j(t_j) +  c
\end{align*}
\label{Thm:CompositionTheoremC}
\end{theorem}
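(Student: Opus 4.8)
The plan is to bound the mass of the \emph{jointly exposed} users (those joint values $v$ with $\probdist_J(v) < c\prod_{j\in J}t_j$) by splitting them according to whether any single feature is already exposed on its own. The one structural fact I would lean on throughout is that marginalizing the joint recovers each marginal: summing $\probdist_J(v)$ over all completions of a fixed coordinate $v_j$ returns the marginal $\probdist_j(v_j)$, so the joint mass sitting over a given $v_j$ is exactly $\probdist_j(v_j)$. (I would dispose of the degenerate case $t_j=0$ up front, since then the threshold is $0$ and both sides vanish.)

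Concretely, first I would partition the exposed values $v$ into two groups: those for which \emph{some} coordinate satisfies $\probdist_j(v_j)<t_j$, and those for which \emph{every} coordinate satisfies $\probdist_j(v_j)\ge t_j$. The total exposure $\exposuresymbol_J(c\prod_{j\in J} t_j)$ is the sum of the $\probdist_J$-mass over these two groups, and I would bound each separately. For the first group I would use a union bound over features: every such $v$ lies in at least one set $\{v:\probdist_j(v_j)<t_j\}$, so its mass is at most $\sum_{j\in J}\sum_{v:\,\probdist_j(v_j)<t_j}\probdist_J(v)$; marginalizing out the remaining coordinates collapses each inner sum to exactly $\exposuresymbol_j(t_j)$, yielding the first term $\sum_{j\in J}\exposuresymbol_j(t_j)$.

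The crux is the second group, which I would control by counting cells. Because each marginal sums to at most $1$, at most $1/t_j$ values of $v_j$ can have $\probdist_j(v_j)\ge t_j$; hence there are at most $\prod_{j\in J}(1/t_j)$ joint values in this group. Each of them is exposed, contributing strictly less than $c\prod_{j\in J}t_j$ to the mass, so their total is at most $\prod_{j\in J}(1/t_j)\cdot c\prod_{j\in J}t_j = c$. Adding the two bounds gives the claim.

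The main obstacle — and the reason for introducing the free parameter $c$ — is precisely this counting step. One must recognize that restricting attention to marginally-safe values caps the number of occupied joint cells at $1/\prod_{j\in J} t_j$, so that $c$ exactly absorbs the product $\prod_{j\in J} t_j$ against that cell count. This is also what makes the bound degrade gracefully without any knowledge of the support sizes $|V_j|$, in contrast to Theorem~\ref{Thm:CompositionTheoremSupport}.
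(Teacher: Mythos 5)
Your proof is correct and is essentially the same argument as the paper's: the paper also splits users into those exposed in some marginal (handled by a union bound, giving $\sum_{j\in J} \exposuresymbol_j(t_j)$) and those marginally safe everywhere, whose joint values occupy at most $\prod_{j\in J}(1/t_j)$ cells so that the jointly-exposed ones among them carry mass at most $c$. The only differences are cosmetic --- the paper works in complement form (lower-bounding the count of non-exposed users, with thresholds written as $1/2^{b_i}$) rather than directly bounding the exposed mass as you do.
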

These lower bounds on the probability threshold $t$ for the joint distribution given by Theorems \ref{Thm:CompositionTheoremSupport} and \ref{Thm:CompositionTheoremC} depend on the product of the probability thresholds for the marginal distributions. 
Clearly, 
this product decays exponentially with the number of marginal distributions $|J|$, 
which means the guarantees from these theorems become weaker when $|J|$ is large. 
Also notice that these guarantees require an additional \emph{slack}, 
either in terms of the support size or the tunable parameter $c$. 
The following theorem show that this slack term is necessary in general.

\begin{theorem}[\textbf{The ``slack'' term is necessary}]
Let \mbox{$c \in (0, 1]$} such that $\frac{1}{c}$ is an integer. 
There exists a matrix $M$, subset of columns \mbox{$J \subseteq [d]$}, and probability thresholds probability thresholds \mbox{$\{t_j\}_{j \in J}$}, such that:
\begin{align*}
    Q_J\!\left(c\prod_{j \in J} t_j\right) &\ge \sum_{j \in J} Q_j(t_j)\\
\end{align*}
\label{Thm:SlackNecessaryOnComposition}
\end{theorem}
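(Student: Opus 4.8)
The plan is to prove this by an explicit construction: since only existence is claimed, it suffices to exhibit one database, one column set $J$, and one threshold vector for which the slack-free version of the composition bound would fail. The design is guided by identifying where the additive term in Theorem~\ref{Thm:CompositionTheoremC} must come from --- namely joint values whose coordinates are all individually ``safe'' (each marginal probability at least $t_j$, so they contribute nothing to any $\exposuresymbol_j(t_j)$) yet whose joint probability dips below $c\prod_j t_j$. I therefore aim for a database whose marginals are \emph{perfectly} anonymous at their thresholds, forcing $\sum_j \exposuresymbol_j(t_j) = 0$, while a constant fraction of users are exposed only in the joint.

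Concretely, I would take $|J| = 2$ and let each column take $1/t_j$ equally likely values, using the hypothesis that $1/c$ (hence each $1/t_j$) can be chosen to be an integer so the user counts are exact; the cleanest instance sets $t_1 = t_2 = c = 1/m$, giving each column $m$ values at probability $1/m$. Since every marginal value then sits exactly at its threshold and the exposure indicator is strict, $\probdist_j$ contributes nothing and $\exposuresymbol_j(t_j) = 0$. For the joint I would use a ``heavy permutation, light elsewhere'' table: put a heavy probability $y$ on the $m$ diagonal cells $(i,i)$ and a small probability $x$ on each of the $m(m-1)$ off-diagonal cells, with $y + (m-1)x = 1/m$ so that both row and column marginals remain uniform. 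Choosing $x$ just below the joint threshold $c\,t_1 t_2 = c/m^2$ makes every off-diagonal user exposed while the diagonal users stay safe, so $\exposuresymbol_J(c\,t_1 t_2)$ equals the off-diagonal mass $m(m-1)x$, which approaches $c(1 - 1/m)$.

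This yields $\exposuresymbol_J(c\prod_j t_j) > 0 = \sum_j \exposuresymbol_j(t_j)$, which proves the claimed inequality and, more to the point, refutes the slack-free bound $\exposuresymbol_J(c\prod_j t_j) \le \sum_j \exposuresymbol_j(t_j)$: a strictly positive additive term is unavoidable. To see that the necessary slack is genuinely of order $c$ (matching Theorem~\ref{Thm:CompositionTheoremC}), I would decouple $c$ from the thresholds and let $t_1 = t_2 = t \to 0$ (equivalently, increase the number of values per column), driving the exposed off-diagonal mass to $c(1 - t) \to c$; that it only \emph{approaches} $c$ is exactly consistent with the strictness of the inequality in the proof of Theorem~\ref{Thm:CompositionTheoremC}.

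The step I expect to be the crux is guaranteeing that both marginal constraints hold simultaneously while the below-threshold joint mass is as large as possible --- a transportation-feasibility condition on the joint table. Placing the heavy cells on a permutation (here the diagonal, but any circulant or Latin-square pattern works) is precisely what forces both row and column sums to stay uniform, and the elementary counting fact that an anonymous-at-$t_j$ column has at most $1/t_j$ values --- so the joint has at most $\prod_j 1/t_j$ cells --- is the same tension that both enables the construction and caps the exposable mass strictly below $c$. Verifying that the diagonal cells remain above threshold (i.e.\ $y \ge c\prod_j t_j$) and that the off-diagonal cells can be pushed arbitrarily close to the threshold are the remaining routine checks.
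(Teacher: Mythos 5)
Your construction is correct, and it takes a genuinely different route from the paper. The paper builds a matrix with $|J|=k$ columns (with $k$ eventually taken large): rows are grouped into $2^k$ blocks of size $a=3/c$ labelled by binary encodings, and $k$ rows are made unique by planting a redaction symbol $\perp$ in distinct columns. There, each marginal exposure is small but \emph{nonzero} ($Q_j(q_j)=\tfrac{1}{a2^k}$, relying as you do on the strictness of the indicator at probability exactly equal to the threshold), the joint exposure at threshold $\tfrac{2}{a2^k}$ equals $\tfrac{k}{a2^k}=\sum_j Q_j(q_j)$, and the proof closes with a limit computation showing that for $k$ large enough $c\prod_j q_j \ge \tfrac{2}{a2^k}$, so monotonicity of $Q_J$ in the threshold gives the claim. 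Your approach instead fixes $|J|=2$, makes the marginals exactly uniform so that $\sum_j Q_j(t_j)=0$ identically, and places a heavy-diagonal/light-off-diagonal joint table whose off-diagonal mass sits just below $c\,t_1t_2$; this yields a strict inequality $Q_J\bigl(c\prod_j t_j\bigr)>0=\sum_j Q_j(t_j)$ with a completely explicit finite database (your integer-count realizability check, e.g.\ $x=c/(2m^2)$ with $n$ a suitable multiple of $m^3$, is routine and sound). Each approach buys something: yours is more elementary (no asymptotics in the number of columns), proves the strict inequality that the phrase ``slack is necessary'' really requires (the stated ``$\ge$'' is weakly satisfied even by a fully anonymous matrix, where both sides vanish), and quantifies the gap as approaching $c(1-\tfrac1m)\to c$, thereby showing the additive constant in Theorem~\ref{Thm:CompositionTheoremC} is essentially tight; the paper's construction, by contrast, exhibits the failure in the many-column regime where the product threshold $\prod_j t_j$ decays exponentially and the right-hand side $\sum_j Q_j(t_j)$ is nonzero, which demonstrates that the slack cannot be absorbed into the marginal exposures even when those are positive. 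One small caveat: your ``cleanest instance'' coupling $t_1=t_2=c=1/m$ degenerates at $c=1$ (where $m=1$ gives only the trivial equality $0\ge 0$); the decoupled version you describe in your third paragraph, with $t_j=1/m$ for $m\ge 2$ chosen independently of $c$, covers all $c\in(0,1]$ and should be the version stated as the proof.
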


\subsection{Applications to \mbox{real-world} data}
\label{sec:application}

As a concrete example, 
we applied the above bounds
(Theorems~\ref{Thm:CompositionTheoremSupport} and~\ref{Thm:CompositionTheoremC}) 
to a dataset from the UCI repository \cite{murphy1992uci}, 
containing data extracted from the 1994 US Census \cite{kohavi1996scaling}. 
It contains $32561$ users and $14$ features (in our analysis we used $4$ of those features,
with support sizes ranging from $2$ to $9$). 
Figure~\ref{fig:adultdataset} displays both the true exposure 
and our bound for the exposure obtained by using both Theorems~\ref{Thm:CompositionTheoremSupport} and \ref{Thm:CompositionTheoremC}, 
and taking the minimum.

\begin{figure*}[!ht]
\begin{center}
\includegraphics[trim={0cm 0.1cm 0cm 0cm},clip, scale=0.43]{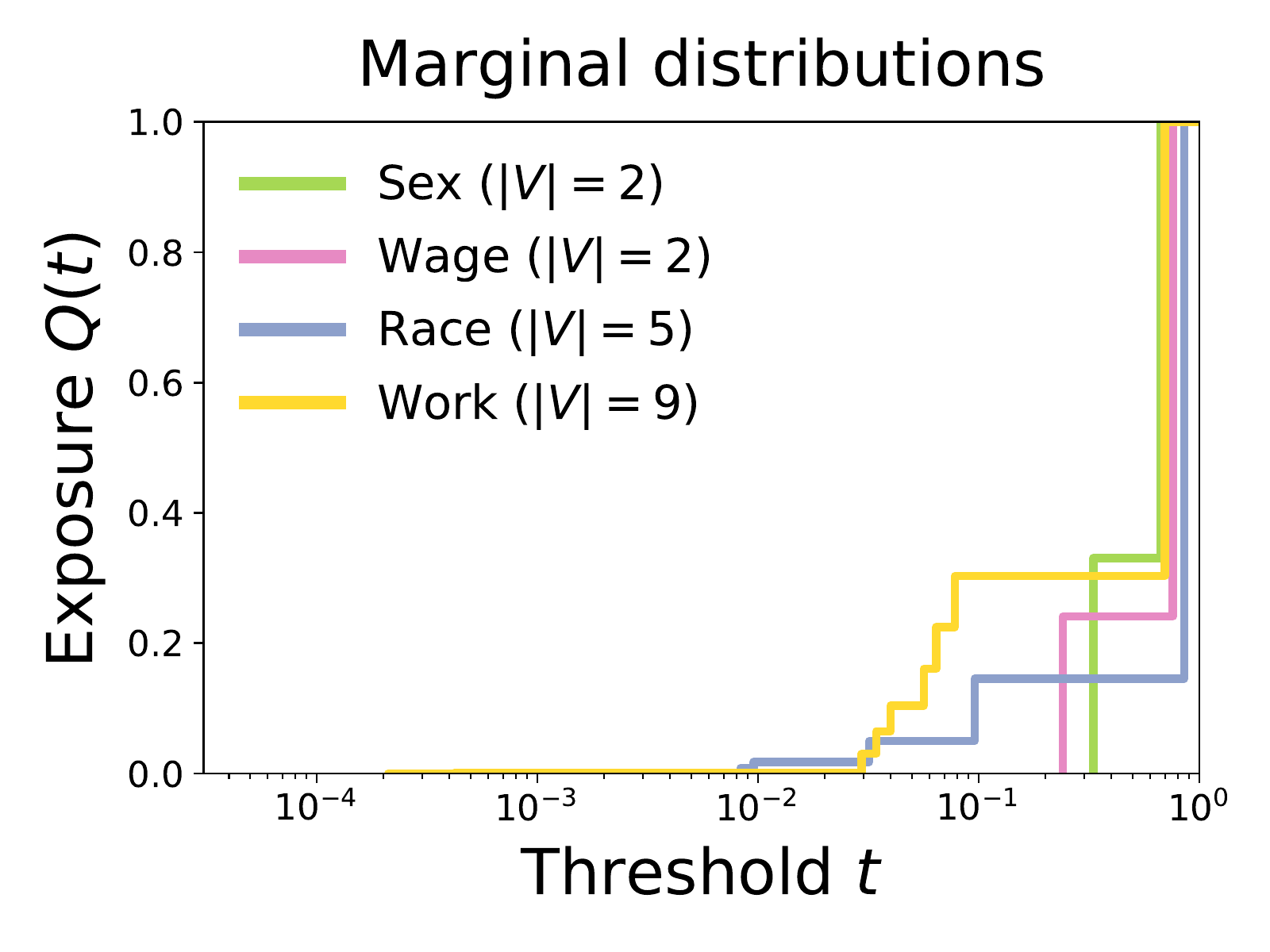}
\hspace{10pt}
\includegraphics[trim={0cm 0.1cm 0cm 0cm},clip, scale=0.43]{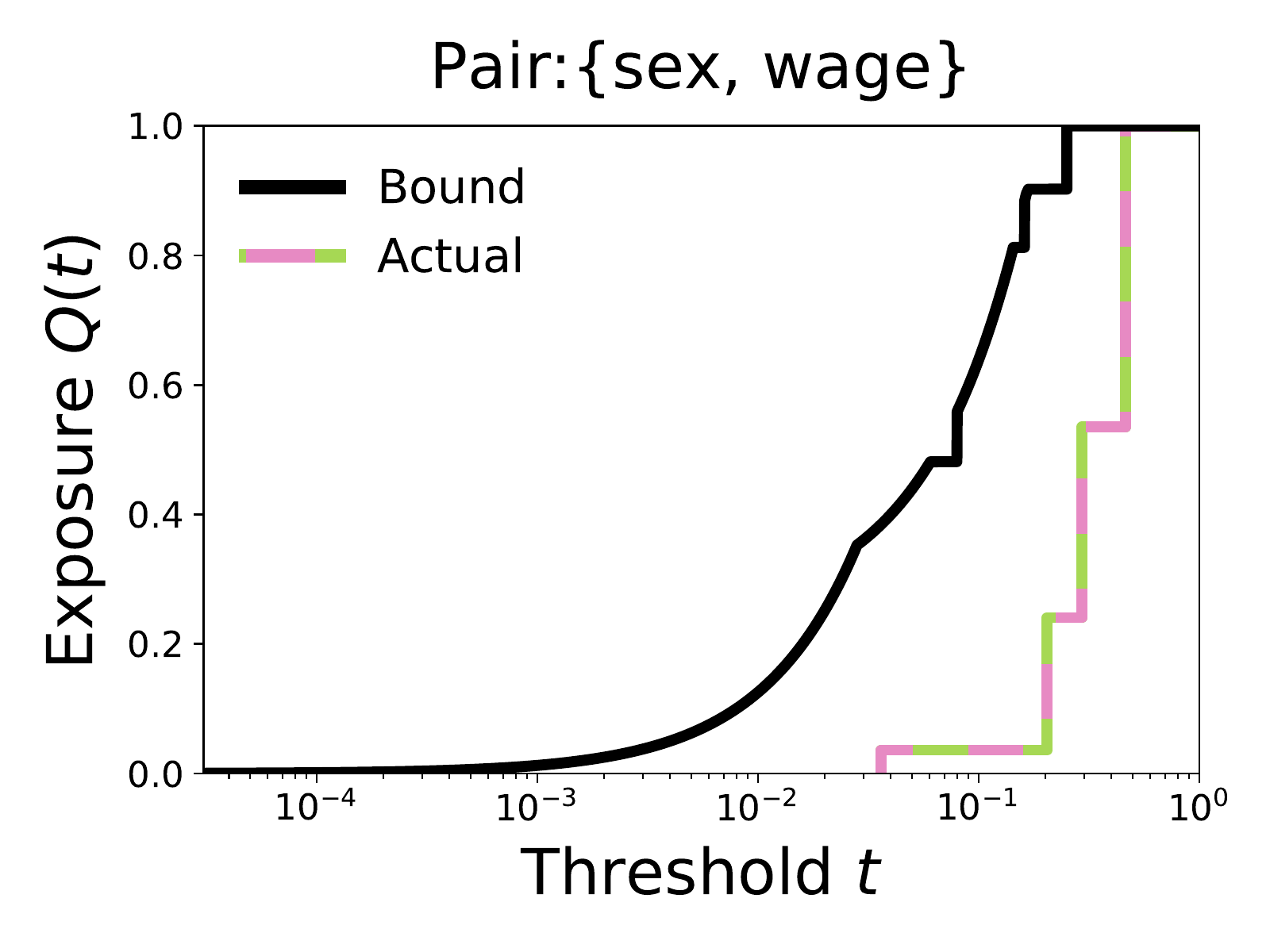}\\
\includegraphics[trim={0cm 0.3cm 0cm 0cm},clip, scale=0.43]{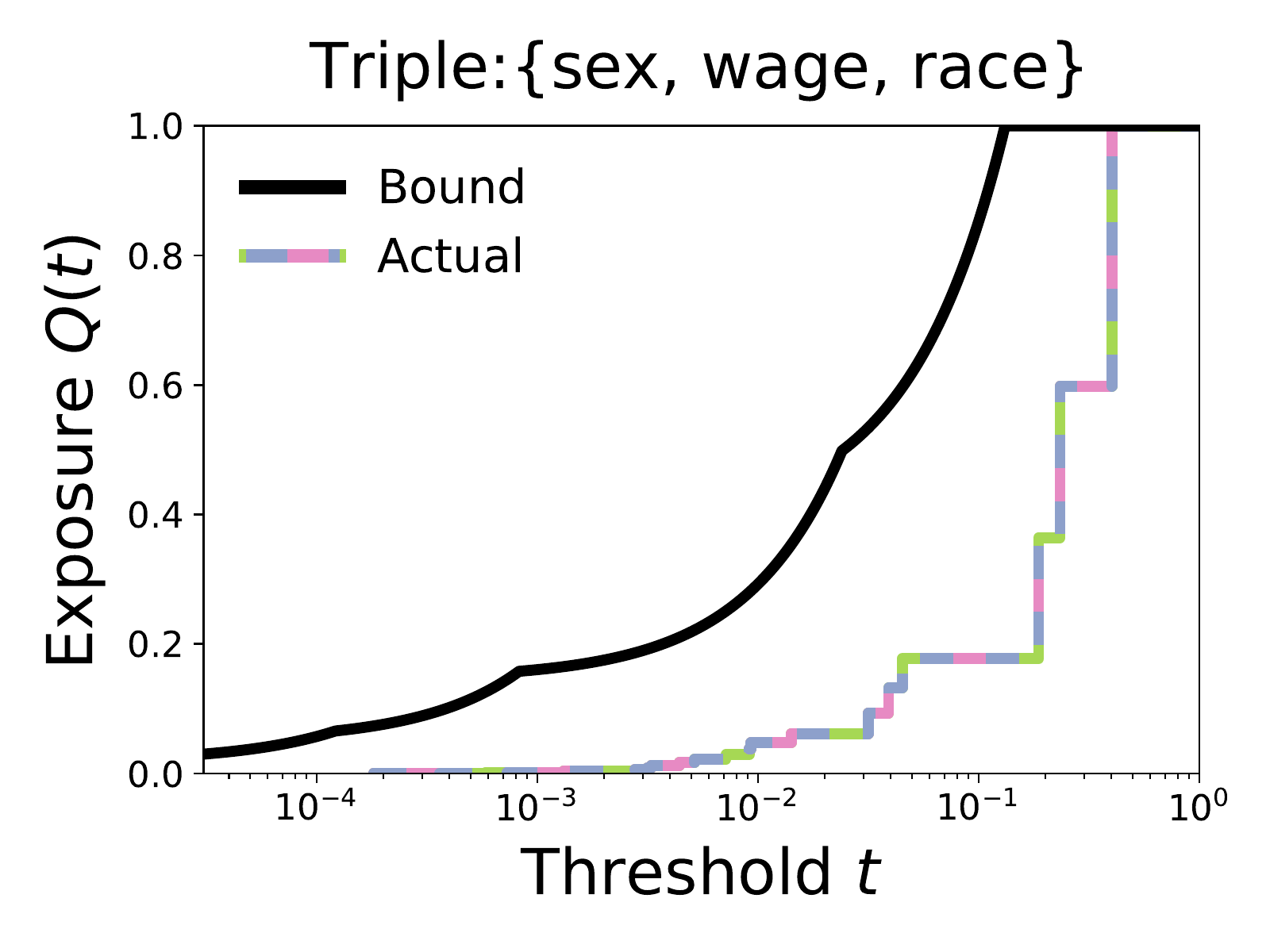}
\hspace{10pt}
\includegraphics[trim={0cm 0.3cm 0cm 0cm},clip, scale=0.43]{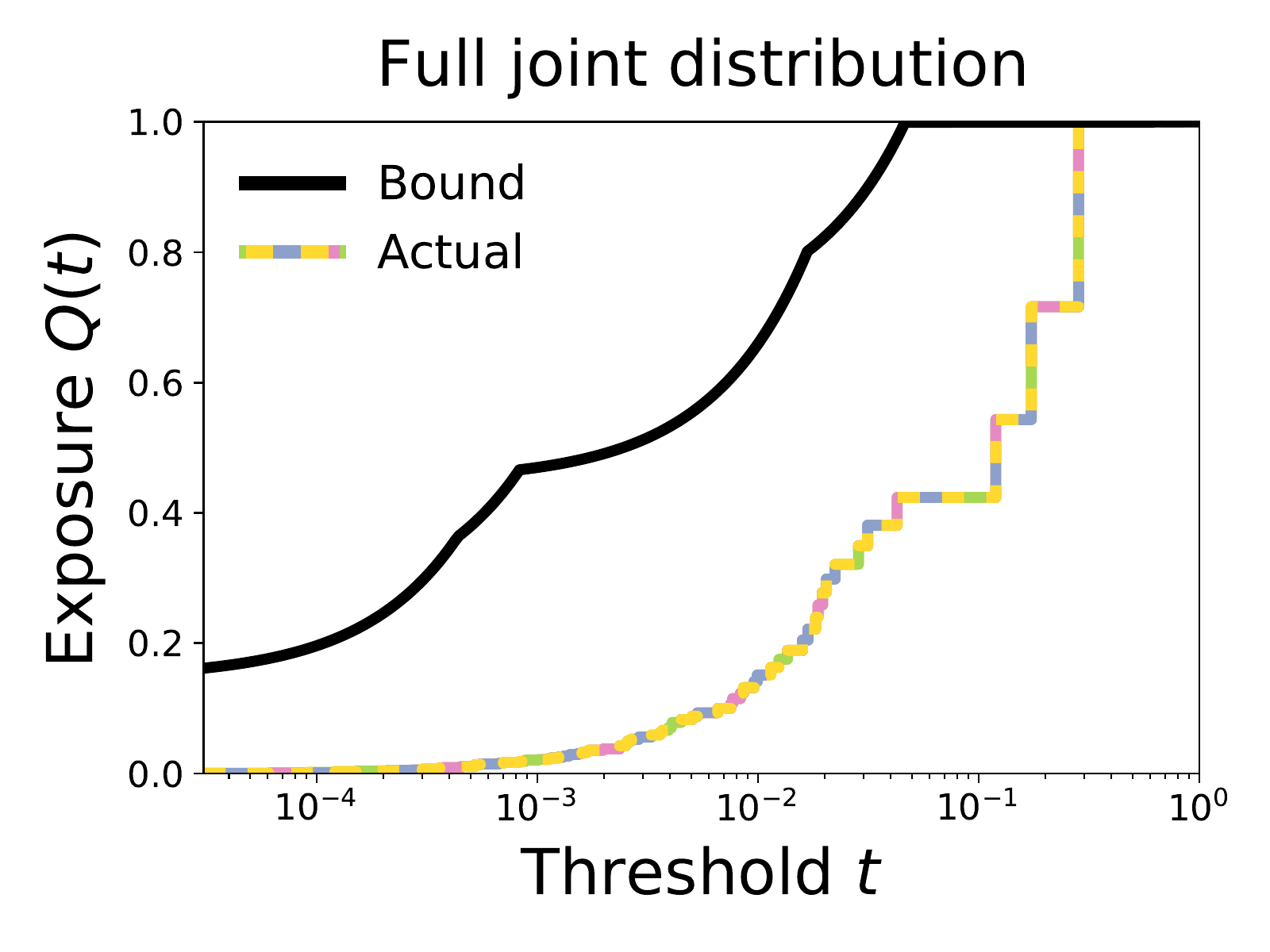}
 \caption{\textbf{Examples of exposure curves for real-world data.}\\ 
 Here, we consider $4$ features from the 1994 US Census dataset \cite{kohavi1996scaling}. 
 In order of appearance, the number of different output values for that feature are: $2$ (sex), $2$ (wage, thresholded at $50k$ per year), $5$ (race), and $9$ (work, e.g., Federal, self-employed, never worked). 
  Upper left shows the exposure curves for the frequencies of these individual features. 
 The remaining plots show how the exposure curve (lighter/colored lines) changes as we consider the joint distribution of an increasing number of features.  
 In black is the minimum of all possible upper bounds given by Theorems~\ref{Thm:CompositionTheoremSupport} and \ref{Thm:CompositionTheoremC}.  
}
 \label{fig:adultdataset}
 \end{center}
\vskip -0.2in
\end{figure*}

\section{Statistical database setting}
\label{sec:StatisticalDatabaseSetting}

We now turn attention to the setting where the values of the individual features and the data to be released do not come from the same set of users. 
To model such a setting, 
we assume that the data from both sets of users are sampled from the same underlying distribution.  
The browser fingerprinting example we mentioned in Section~\ref{sec:ExamplesUse} is an application for this setting. 

\subsection{Estimation of the exposure}
\label{sec:estimation}

In this case, the curator can estimate the distribution of values across users. 
This distribution can then be used to estimate the exposure for the set of users who will release the data to the analyst. 
It is natural to ask what guarantees can be given on the exposure of this data. 
Our first result is a bound on the \mbox{plug-in} estimator of the exposure $\widehat{\exposuresymbol}$. 

\begin{theorem}[\textbf{Plug-in estimator for the exposure}] 
\label{thm:plugin-exposure}
Let \mbox{$\gamma > 0$} and \mbox{$\delta \in [0, 1]$}.\\
If the number of samples is \mbox{$n \ge \frac{\log\frac1\delta + \log |V_j| }{2\gamma^2}$},\\ 
then with probability at least \mbox{$1 - \delta$}:\\
\[
\widehat{\exposuresymbol}_j(t - \gamma) - \gamma|V_j| \le \exposuresymbol_j(t) \le \widehat{\exposuresymbol}_j(t + \gamma) + \gamma|V_j| .
\]

\end{theorem}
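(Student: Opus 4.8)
The plan is to reduce the probabilistic statement to a deterministic comparison of exposure functions on a high-probability ``good event'' on which every per-value frequency is estimated accurately. Concretely, for each $v \in V_j$ the curator's empirical frequency $\widehat{\probdist}_j(v)$ is an average of $n$ i.i.d.\ indicators with mean $\probdist_j(v)$, so Hoeffding's inequality gives $\Pr\!\big[|\widehat{\probdist}_j(v) - \probdist_j(v)| > \gamma\big] \le 2e^{-2n\gamma^2}$. A union bound over the $|V_j|$ values shows that, once $n \ge \frac{\log\frac1\delta + \log|V_j|}{2\gamma^2}$, with probability at least $1-\delta$ we have the uniform deviation bound $|\widehat{\probdist}_j(v) - \probdist_j(v)| \le \gamma$ for all $v \in V_j$ simultaneously. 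Call this event $\mathcal{E}$; the remainder of the argument is deterministic and conditions on $\mathcal{E}$.

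Next I would prove the two inequalities term by term, comparing the summand for each $v$ on the two sides. Take the upper bound $\exposuresymbol_j(t) \le \widehat{\exposuresymbol}_j(t+\gamma) + \gamma|V_j|$. A value $v$ contributes $\probdist_j(v)$ to $\exposuresymbol_j(t)$ exactly when $\probdist_j(v) < t$; on $\mathcal{E}$ this forces $\widehat{\probdist}_j(v) \le \probdist_j(v) + \gamma < t + \gamma$, so the same $v$ also contributes to $\widehat{\exposuresymbol}_j(t+\gamma)$, where its summand satisfies $\widehat{\probdist}_j(v) \ge \probdist_j(v) - \gamma$. Hence each contributing value's summand drops by at most $\gamma$, while non-contributing values add a nonnegative amount on the right-hand side; since there are at most $|V_j|$ contributing values, the total accumulated slack is at most $\gamma|V_j|$. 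The lower bound $\widehat{\exposuresymbol}_j(t-\gamma) - \gamma|V_j| \le \exposuresymbol_j(t)$ is symmetric: on $\mathcal{E}$, $\widehat{\probdist}_j(v) < t - \gamma$ forces $\probdist_j(v) < t$, so every value counted in $\widehat{\exposuresymbol}_j(t-\gamma)$ is also counted in $\exposuresymbol_j(t)$ with summand differing by at most $\gamma$.

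The step I expect to be the real obstacle --- and the reason the theorem shifts the threshold by $\gamma$ and lets the support size enter --- is the discontinuity of the indicator $\indic{\probdist_j(v) < t}$ in the frequency. A uniformly small perturbation of the frequencies can still flip the membership $\{\probdist_j(v) < t\}$, so one cannot simply argue that $\widehat{\exposuresymbol}_j(t)$ is close to $\exposuresymbol_j(t)$ at the \emph{same} threshold $t$. The fix is exactly the threshold dilation: moving the threshold to $t+\gamma$ (resp.\ $t-\gamma$) guarantees that, on $\mathcal{E}$, no value that ought to be counted is dropped, leaving only a magnitude error of at most $\gamma$ per value. Aggregating this magnitude error over the support is what produces the $\gamma|V_j|$ term, and I would stress that it reflects the genuine price of controlling the indicator's discontinuity rather than a loose estimate.
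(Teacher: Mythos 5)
Your proposal is correct and takes essentially the same route as the paper's proof: Hoeffding's inequality plus a union bound over $V_j$ to establish the uniform deviation event $|\widehat{\probdist}_j(v) - \probdist_j(v)| \le \gamma$, followed by the same deterministic threshold-dilation argument in which each counted value loses at most $\gamma$ in magnitude, accumulating to the $\gamma|V_j|$ slack. The only differences are cosmetic --- the paper chains the sum inequalities directly while you argue term by term, and (like the paper itself) you elide the factor of $2$ in the two-sided Hoeffding bound, which strictly speaking adds a $\log 2$ to the sample-size requirement.
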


This theorem provides us with a way to quantify the difference between the empirical estimator of exposure $\widehat{\exposuresymbol}_j$ and the true exposure $\exposuresymbol_j$.  
Notice however that due to the discontinuity of $\widehat{\exposuresymbol}_j$ the bound can become vacuous even for small values of $\gamma$. 
The following lower bound shows that this issue is true of any estimator of the exposure, 
not only the \mbox{plug-in} estimator. 
\vspace{4pt}

\begin{theorem}[\textbf{Hardness of estimating the exposure}] \label{thm:lower}
Let\mbox{ $\Delta_s = \{ \sum_{i=1}^s p_i = 1 : p_i \ge 0 ~\forall i \}$} be the set of distributions over \mbox{$\{1, \dots, s \}$} 
and \mbox{$\mathcal{F}_n$} be the the set of measurable functions mapping \mbox{$\{ 1, \dots , s \}^n \mapsto [0,1]$}. 
Then
\begin{align*}
    \lim_{n\to \infty} \mathcal{R}_n 
    \geq \frac{1}{16s}
\end{align*}
where the minimax risk $\mathcal{R}_n$ of exposure is defined as
\begin{align*}
\inf_{f \in \mathcal{F}_n} \sup_{\probdist \in \Delta_s }
    \E_{x_1, ..., x_n \sim \probdist}\Bigg[\sup_{t \in [0,1]}\Big|f(x_1, ... , x_n ) - Q_{\probdist}( t ) \Big|\Bigg] \enspace .
\end{align*}    
\end{theorem}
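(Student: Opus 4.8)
The plan is to prove this $n$-independent lower bound by Le Cam's two-point method, reducing estimation of the whole exposure curve $t\mapsto \exposuresymbol_{\probdist}(t)$ to a binary hypothesis test between two distributions that are statistically indistinguishable from $n$ samples yet whose exposure curves are separated in sup-norm. The conceptual crux I want to exploit is that $\exposuresymbol_{\probdist}(t)$ is a \emph{discontinuous} functional of $\probdist$: sliding a single atom's probability across the threshold $t$ changes the exposure by essentially that atom's entire mass, no matter how small the slide. This is exactly what allows a nonzero floor to survive the limit $n\to\infty$, where two fixed distinct distributions would otherwise become perfectly distinguishable.

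For each $n$ I would choose $\probdist_0,\probdist_1\in\Delta_s$ that differ only in how one ``straddling'' atom sits relative to the threshold $t=\tfrac1s$. Concretely, let one symbol absorb the bulk of the mass (probability $\approx 1-\tfrac1s$, which stays strictly above $\tfrac1s$ for $s\ge 3$), and let a second ``straddling'' symbol carry probability $\tfrac1s-\epsilon_n$ under $\probdist_0$ and $\tfrac1s+\epsilon_n$ under $\probdist_1$, with $\epsilon_n=\Theta(1/\sqrt n)$ (the remaining symbols have zero mass under both, which is permitted in $\Delta_s$). Evaluating Definition~\ref{def:exposure} at $t=\tfrac1s$: under $\probdist_0$ the straddler has probability $<\tfrac1s$ and is counted, contributing $\approx\tfrac1s$, whereas under $\probdist_1$ it has probability $\ge\tfrac1s$ and is not counted, while the absorbing symbol never contributes. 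Hence
\[
\sup_{t\in[0,1]}\bigl|\exposuresymbol_{\probdist_0}(t)-\exposuresymbol_{\probdist_1}(t)\bigr|\ \ge\ \Bigl|\exposuresymbol_{\probdist_0}(\tfrac1s)-\exposuresymbol_{\probdist_1}(\tfrac1s)\Bigr|\ \ge\ \tfrac1s-\epsilon_n .
\]
Establishing this $n$-independent separation is the heart of the argument and the step I expect to be the main obstacle: one must arrange the masses so that, simultaneously, exactly the straddler flips across $t=\tfrac1s$ while every other atom stays on its side, and the pair remains inside $\Delta_s$. (The symmetric case $s=2$ degenerates into a relabeling and would instead be handled by a two-symbol straddle placed just below $t=\tfrac12$.)

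For indistinguishability, since $\probdist_0$ and $\probdist_1$ differ by $O(\epsilon_n)$ in only two coordinates with matching supports, $\mathrm{KL}(\probdist_0\,\|\,\probdist_1)=O(\epsilon_n^2)$, so the $n$-fold products satisfy $\mathrm{KL}\bigl(\probdist_0^{\otimes n}\,\|\,\probdist_1^{\otimes n}\bigr)=O(n\epsilon_n^2)=O(1)$; choosing the constant in $\epsilon_n=\Theta(1/\sqrt n)$ small enough, Pinsker's inequality gives $\mathrm{TV}\bigl(\probdist_0^{\otimes n},\probdist_1^{\otimes n}\bigr)\le\tfrac34$ for all large $n$.

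To assemble the bound, note first that by the triangle inequality any curve estimate $f$ obeys $\sup_t|f-\exposuresymbol_{\probdist_0}|+\sup_t|f-\exposuresymbol_{\probdist_1}|\ge\sup_t|\exposuresymbol_{\probdist_0}-\exposuresymbol_{\probdist_1}|$, so the standard testing reduction applies with sup-norm separation $\tfrac1s-\epsilon_n$. Le Cam's two-point bound then yields $\mathcal R_n\ge\tfrac14\bigl(\tfrac1s-\epsilon_n\bigr)\bigl(1-\mathrm{TV}(\probdist_0^{\otimes n},\probdist_1^{\otimes n})\bigr)$; inserting $\mathrm{TV}\le\tfrac34$ and letting $n\to\infty$ (so $\epsilon_n\to0$) gives $\liminf_n\mathcal R_n\ge\tfrac14\cdot\tfrac1s\cdot\tfrac14=\tfrac{1}{16s}$, as claimed. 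I would close with the remark that a two-symbol straddle placed near $t=\tfrac12$ produces a separation approaching $\tfrac12$, hence in fact a dimension-free constant lower bound; the $\tfrac1s$ form is stated only to match the theorem, and the factor $\tfrac1{16}$ is merely the product of Le Cam's constant and the indistinguishability slack and is not optimized.
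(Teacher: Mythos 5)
Your proposal is correct and follows essentially the same route as the paper: a Le Cam two-point argument with two nearly indistinguishable distributions whose atom(s) straddle the chosen threshold, giving a sup-norm separation of roughly $1/s$ that survives the limit $n \to \infty$. The differences are cosmetic --- the paper spreads the straddling mass over $s$ atoms of size $\approx 1/s^2$ (with threshold $q = 1/s^2 - \varepsilon/2$) rather than one atom of size $\approx 1/s$, and it invokes a KL-conditioned form of Le Cam's bound directly instead of passing through Pinsker's inequality and total variation --- neither of which changes the substance or the constant $\frac{1}{16s}$.
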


This formalizes the fact that 
the exposure function cannot be estimated with arbitrary small error.  
More concretely, think of the case when the threshold is equal to one of the probabilities $p_i$. 
In this case, 
one would need to estimate the probability of that observation with zero error to get an estimate of the true exposure with also zero error, 
which clearly requires infinitely many samples. 
This issue is inherent to several measures that are based on thresholded statistic of a cumulative distribution function (CDF) such as quantiles \cite{chen2020survey}.  
The next section offers a possible remedy.

\subsection{Statistical exposure (new privacy metric)} 
\label{sec:StatExposure}

As we have just shown, estimating the exposure of a random database from samples is hard. 
We leverage the assumption that the observed and released databases are sampled independently from the same distribution $\probdist_{ }$, and instead estimate the \emph{statistical exposure} (see also Appendix~\ref{Appendix:statistical_exposure}): 

\begin{definition}[\textbf{Statistical exposure}]
The \emph{statistical exposure}, \mbox{$\smoothexposuresymbol_{\probdist}(n,k)$}, is the probability that a random user in a database of size of $n$ sampled i.i.d. from the discrete probability distribution $\probdist_{ }$ is less than \mbox{$k$-anonymous}:\footnote{
The CDF of the binomial distribution, $\probdist(x\leq k)$,  
is given by $I_{1-p}\big(n-k,k+1\big)$.
For a given user in a database of size $n$ to be less than \mbox{$k$-anonymous},
there must be at most \mbox{$k-2$} with the same features out of \mbox{$n-1$} other i.i.d. samples, 
hence the term $I_{1-p}\big( (n-1)-(k-2), (k-2)+1\big) = I_{1-p}\big( n-(k-1), k-1\big)$.
}
\begin{align}
     \smoothexposuresymbol_{\probdist}(n,k) := \sum_{i=1}^{|V|}  p_i I_{1-p_i}\big( n-(k-1), k-1 \big),
\label{eq:smoothexposure}
\end{align}
where the function $I$ is the regularized incomplete beta function:
\begin{align*}
     I_{p}(a,b) &\equiv \frac{B(a,b;p)}{B(a,b;1)}, \\
     B(a,b;p) &\equiv \int_{0}^{p}z^{a-1}(1-z)^{b-1}dz. 
\end{align*}
\end{definition}

The main advantage of the statistical exposure is that, unlike exposure, 
it can be accurately estimated with access to an estimate of $\probdist$ as shown by the following theorem.
\vspace{4pt}

\begin{theorem}[\textbf{Estimating the statistical exposure}]
\label{thm:approximation_smooth}
Let $\probdist$ be the true distribution, 
and $\widehat{\probdist}$ its empirical frequency. 
Then, for all $k$ and $n$: 
\begin{equation}
    |\smoothexposuresymbol_{\probdist}(n,k) - \smoothexposuresymbol_{\widehat{\probdist}}(n,k)| \leq C\sqrt{n}\|\probdist - \widehat{\probdist}\|_\infty,
\end{equation}
where $C$ is a constant that depends linearly on the support size of $\probdist$ (see Appendix~\ref{prooftheorem6}).
\end{theorem}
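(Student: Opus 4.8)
The plan is to reduce the bound to a one-dimensional Lipschitz estimate. Both $\smoothexposuresymbol_{\probdist}(n,k)$ and $\smoothexposuresymbol_{\widehat{\probdist}}(n,k)$ are sums over the common index set $\{1,\dots,|V|\}$ of the \emph{same} scalar function $f(p) := p\, I_{1-p}\big(n-(k-1),\,k-1\big)$, evaluated at $p_i$ and $\widehat p_i$ respectively. Hence the difference splits coordinatewise, $\smoothexposuresymbol_{\probdist}(n,k)-\smoothexposuresymbol_{\widehat{\probdist}}(n,k)=\sum_{i=1}^{|V|}\big(f(p_i)-f(\widehat p_i)\big)$, and it suffices to control $\sup_{p\in[0,1]}|f'(p)|$ and then sum the $|V|$ contributions. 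This summation is exactly where the linear dependence of $C$ on the support size $|V|$ will enter.

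Next I would compute $f'$. Using that the derivative of the regularized incomplete beta function in its upper limit is the beta density, $\frac{d}{dx}I_x(a,b)=x^{a-1}(1-x)^{b-1}/B(a,b;1)$, together with the chain rule for $x=1-p$ and $a=n-(k-1)$, $b=k-1$, I obtain $f'(p)=I_{1-p}(a,b)-p^{b}(1-p)^{a-1}/B(a,b;1)$. The first term lies in $[0,1]$. For the second, the identity $1/B(n-k+1,k-1;1)=(n-1)\binom{n-2}{k-2}$ shows that it equals $(n-1)\binom{n-2}{k-2}p^{k-1}(1-p)^{n-k}$, i.e.\ $(n-1)\,p$ times the probability mass function of a $\mathrm{Binomial}(n-2,p)$ at $k-2$ (equivalently, up to the factor $\tfrac{k-1}{n}$, a $\mathrm{Beta}(k,n-k+1)$ density).

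The conclusion then follows from the mean value theorem plus a peak bound. Each coordinate difference is at most $\sup_p|f'(p)|\,|p_i-\widehat p_i|\le \sup_p|f'(p)|\,\|\probdist-\widehat{\probdist}\|_\infty$, so the total is at most $|V|\cdot\sup_p|f'(p)|\cdot\|\probdist-\widehat{\probdist}\|_\infty$. Since $|f'(p)|\le 1+(n-1)\binom{n-2}{k-2}p^{k-1}(1-p)^{n-k}$, everything reduces to showing that this binomial-type density peaks at height $O(\sqrt n)$: evaluating at the mode $p^\star=(k-1)/(n-1)$ and applying Stirling's approximation to $\binom{n-2}{k-2}$ yields a pmf peak of order $1/\sqrt{n\,p^\star(1-p^\star)}$, and the prefactor $(n-1)p^\star=k-1$ brings the second term to $O(\sqrt n)$. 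Combining gives $|\smoothexposuresymbol_{\probdist}(n,k)-\smoothexposuresymbol_{\widehat{\probdist}}(n,k)|\le C\sqrt n\,\|\probdist-\widehat{\probdist}\|_\infty$ with $C=O(|V|)$.

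I expect the load-bearing step to be the peak bound, packaged as a self-contained lemma that $\max_{p\in[0,1]}(n-1)\binom{n-2}{k-2}p^{k-1}(1-p)^{n-k}=O(\sqrt n)$: the $\sqrt n$ scaling is precisely the height of a concentrated binomial, and the Stirling estimate must be carried out carefully and, crucially, \emph{uniformly in $k$}. Controlling it near the boundary (e.g.\ $k$ close to $n$, where the density concentrates near $p=1$ and the naive per-coordinate estimate degrades) is the delicate point; the cleanest route is to keep the factor $(n-1)p^\star$ paired with the pmf throughout, so that the product rather than either factor alone is estimated.
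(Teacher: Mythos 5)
Your reduction to a scalar estimate is clean, and your derivative formula is correct: with $a=n-k+1$, $b=k-1$ one indeed has $f'(p)=I_{1-p}(a,b)-(n-1)\binom{n-2}{k-2}p^{k-1}(1-p)^{n-k}$. The genuine gap is the load-bearing lemma itself: $\sup_{p\in[0,1]}|f'(p)|$ is \emph{not} $O(\sqrt n)$ uniformly in $k$, and the pairing trick cannot make it so. Take $k=n$: then $f(p)=p\,I_{1-p}(1,n-1)=p-p^{n}$, so $f'(p)=1-np^{n-1}$ and $\sup_p|f'(p)|=n-1=\Theta(n)$. Carrying your Stirling evaluation out uniformly shows the same thing: at the mode $p^{\star}=(k-1)/(n-1)$ the second term of $f'$ is of order $\sqrt{n(k-1)/(n-k+1)}$, which is $O(\sqrt n)$ only when $k/n$ stays bounded away from $1$, and grows to order $n$ when $n-k=O(1)$. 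The pairing fails precisely in that regime: the mode sits at $p^{\star}\approx 1$, where the binomial pmf peak is $\Theta(1)$ rather than $\Theta(1/\sqrt{n})$ (the distribution is no longer spread over $\sim\sqrt n$ values), while the prefactor $(n-1)p^{\star}$ is $\Theta(n)$. So your argument does go through for $k\le\alpha n$ with $\alpha<1$ fixed (with $C$ also depending on $\alpha$), but not ``for all $k$ and $n$'' as the theorem asserts.

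Moreover, this is not a technicality of the mean-value-theorem route that cleverer bookkeeping could repair: the sup-norm Lipschitz constant of $\probdist\mapsto\smoothexposuresymbol_{\probdist}(n,n)$ really is $\Theta(n)$, and the stated inequality itself fails at $k=n$. Indeed $\smoothexposuresymbol_{\probdist}(n,n)=1-\sum_i p_i^n$; taking $\probdist=(1-\tfrac1n,\tfrac1n)$, the empirical frequency equals $\widehat{\probdist}=(1,0)$ with probability $(1-\tfrac1n)^n\to e^{-1}$, and on that event $|\smoothexposuresymbol_{\probdist}(n,n)-\smoothexposuresymbol_{\widehat{\probdist}}(n,n)|=1-(1-\tfrac1n)^n-n^{-n}\to 1-e^{-1}$, while $C\sqrt n\,\|\probdist-\widehat{\probdist}\|_\infty=C/\sqrt n\to 0$. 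For comparison, the paper's own proof takes a different route with no differentiation: it splits $p_iF(k;n,p_i)-\widehat p_iF(k;n,\widehat p_i)$ as $p_i\bigl(F(k;n,p_i)-F(k;n,\widehat p_i)\bigr)+F(k;n,\widehat p_i)(p_i-\widehat p_i)$, bounds the CDF difference by the total-variation distance between $\mathrm{Bin}(n,p_i)$ and $\mathrm{Bin}(n,\widehat p_i)$ via a bound of Adell and Jodr\'a contributing a per-coordinate factor $\tfrac{\sqrt e}{2}\sqrt{p_i(n+1)/(1-p_i)}$, and finally bounds $\sum_i\sqrt{p_i/(1-p_i)}$ by a constant depending only on $|V|$ --- a step that in fact blows up as some $p_i\to 1$, i.e.\ exactly the boundary regime that defeats your lemma. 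So your instinct about where the delicacy lies is exactly right, but it is a genuine obstruction to the statement as written (and to the paper's proof of it), not something the paired-factor estimate can absorb; a correct result needs either a restriction such as $k\le\alpha n$ or a weaker bound near the boundary.
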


Figure~\ref{Fig:SmoothExposureSingle} illustrates the difficulty of estimating the exposure from samples, 
and that the statistical exposure is a more reliable estimator with less variance. 
 \begin{figure}[h]
 \begin{center}
  \includegraphics[trim={0 0cm 0 0cm},clip, scale=0.43]{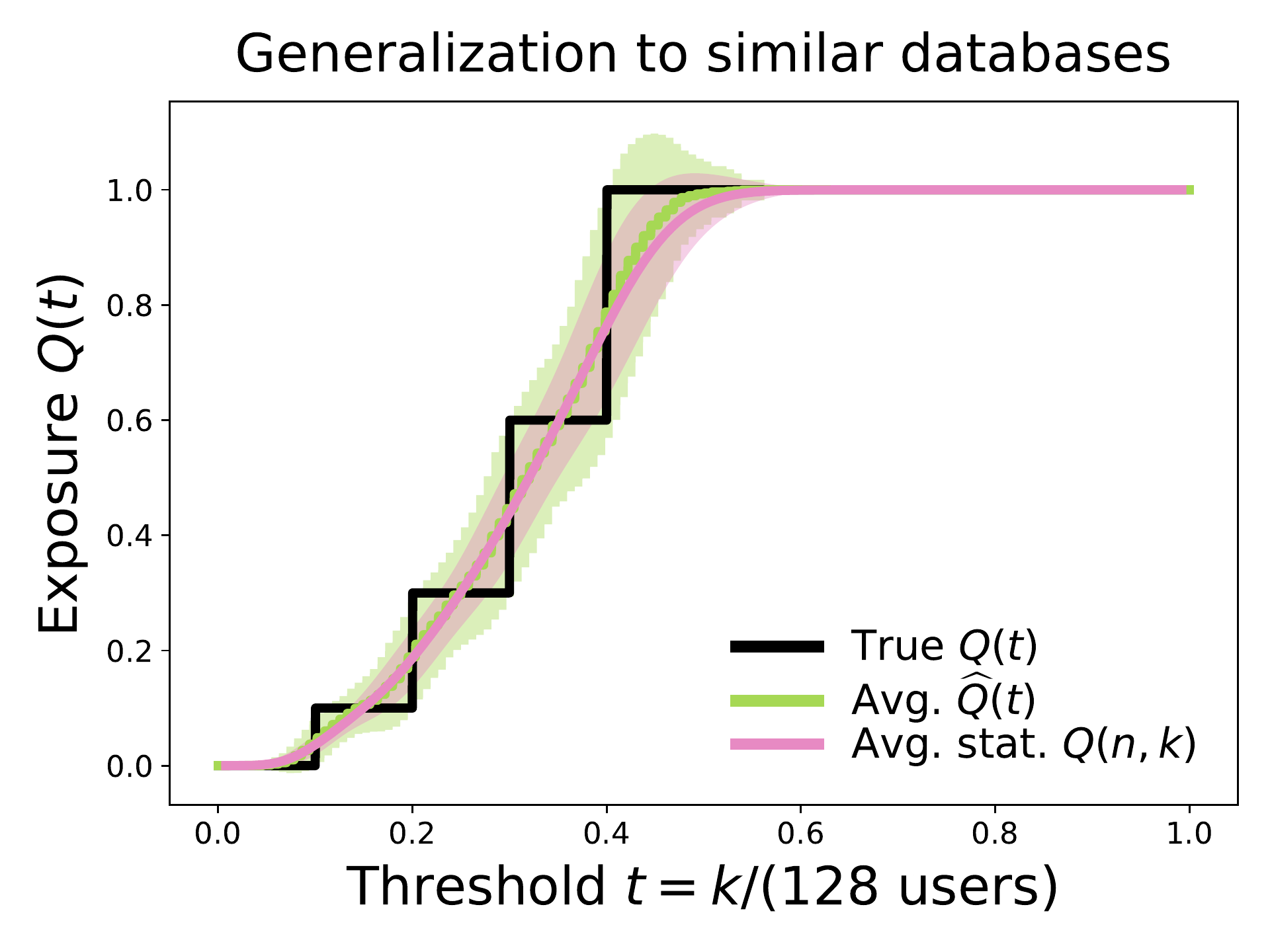}
  \caption{
  \textbf{The statistical exposure is a more robust\\estimator of the true (normalized) anonymity.}\\
  We ran $1000$ simulations of a database with $128$ users sampled i.i.d.~from a multinomial distribution $\probdist$ over $4$ outputs, and computed the \mbox{plug-in} estimators for the exposure (green) and the statistical exposure (pink). 
  When using one database to estimate the exposure of another one drawn from the same distribution, one expects fluctuations. 
  By taking these into account, the statistical exposure has a smaller standard deviation (shading) than the \mbox{plug-in} estimator for the exposure.  
  }
  \label{Fig:SmoothExposureSingle}
   \end{center}
   \vskip -0.2in
\end{figure}

\subsection{Relationship with Shannon entropy\\
\hphantom{7.3.}(a metric used for browser fingerprinting)}
\label{sec:entropyconnection}

The browser fingerprinting community has frequently used the Shannon entropy as a metric of how identifiable users are given the
information provided by different Application Programming Interfaces (APIs) (e.g., screen size, browser type, font installed). 
Given a distribution over values in a database \mbox{$\probdist = (p_1, \ldots, p_m)$}, 
its Shannon entropy  $H(\probdist) $ is defined as: 
\begin{equation*}
    H(\probdist) = -\sum_{i=1}^m p_i \log(p_i) = \mathbb{E}_{I\sim \probdist}[\log(p_i)].
\end{equation*}
One appealing property of the entropy is that marginal entropies can be used to bound the entropy of the full distribution via \mbox{inclusion-exclusion} principles.  
In addition, it can be accurately estimated with even when the number of samples is less than the support size of the distribution \cite{minmaxentropy}. 

However, a common (erroneous) interpretation of entropy is that if a database consists of $n$ users and the entropy is $B$ bits, 
then each value in the database is shared by \mbox{$\sim n/2^B$} users. 
While encouraging, this interpretation is true only when $\probdist$ is uniform (or close to it). 
The following proposition shows a more accurate interpretation of the entropy by using the exposure. 

\begin{proposition}[\textbf{Exposing the entropy}]
\label{prop:entropyconnection}
Let $\probdist$ denote a distribution and $Q(t)$ denote its exposure at threshold $t$. 
The following relations between entropy and $Q(t)$ hold:
\begin{align*}
    H(\probdist) = \int_0^1 \frac{Q(t)}{t} dt \qquad \text{and} \qquad
    Q(t) \leq -\frac{H(\probdist)}{\log (t)} .
\end{align*}
\end{proposition}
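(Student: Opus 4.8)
The plan is to prove both relations directly from the definition $Q(t)=\sum_i p_i\indic{p_i<t}$, interchanging summation and integration (trivially valid here, since $\probdist$ has finite support, so this is a finite sum) and then exploiting the elementary fact that $\log(1/p_i)\ge\log(1/t)$ precisely when $p_i<t$.

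For the \textbf{integral identity}, I would substitute the definition of $Q(t)$ into $\int_0^1 Q(t)/t\,dt$ and swap the finite sum with the integral. Since $\indic{p_i<t}$ equals $1$ exactly on $t\in(p_i,1]$, the inner integral collapses to $\int_{p_i}^1 t^{-1}\,dt=-\log p_i$. Summing over $i$ then yields $\sum_i p_i(-\log p_i)=H(\probdist)$, with the usual convention $0\log 0=0$ absorbing any zero-probability atoms.

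For the \textbf{upper bound}, I would first observe that $\log t<0$ on $(0,1)$, so the claim is equivalent to $Q(t)\log(1/t)\le H(\probdist)$. Starting from $H(\probdist)=\sum_i p_i\log(1/p_i)$, I would discard the terms with $p_i\ge t$ (each summand is nonnegative because $p_i\le 1$), leaving $H(\probdist)\ge\sum_{i:\,p_i<t}p_i\log(1/p_i)$. On the surviving terms, $p_i<t$ forces $\log(1/p_i)\ge\log(1/t)$, so this sum is at least $\log(1/t)\sum_{i:\,p_i<t}p_i=\log(1/t)\,Q(t)$. Rearranging and dividing by the positive quantity $-\log t$ then gives $Q(t)\le -H(\probdist)/\log t$.

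I do not anticipate a genuine obstacle; the proof is essentially a short computation. The only points requiring care are bookkeeping ones: tracking the sign of $\log t$ when dividing through the inequality, confirming that dropping the $p_i\ge t$ terms only decreases the sum, and noting the boundary cases $p_i=0$ (contributing nothing) and $t=1$ (where $\log t=0$ makes the bound vacuous).
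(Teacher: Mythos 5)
Your proposal is correct and follows essentially the same route as the paper: the integral identity is proved by the identical swap-and-integrate computation, and your tail bound is just Markov's inequality applied to the random variable $-\log p_I$ at threshold $-\log t$, which you prove inline (by dropping the nonnegative terms with $p_i \ge t$) rather than citing, as the paper does. The sign bookkeeping and boundary cases you flag are handled correctly.
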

Concretely, consider the scenario of a database with \mbox{$n = 2^{16}\sim 65000$} users and a distribution \mbox{$\probdist$ with $H(\probdist) = 8$}. 
The common interpretation would suggest that the majority of users cannot be identified up to \mbox{$2^8 = 256$} users. 
On the other hand if \mbox{$t = 2^{-16}$}, 
then $Q(t)$ corresponds to the fraction of users who can be uniquely identified,
and the above bound implies that this could be up to $50\%$ of the users. 
In the Appendix Section~\ref{appendix:connectionentropy}, we show this bound is tight.

\section{Extensions}
\label{sec:FutureDirections}

\subsection{Locally redacting entries}
\label{sec:LocallyRedactingEntries}

In this paper, 
we considered global suppression; 
the curator redacts the same set of features for all users.  
Local suppression methods allow for a more targeted preservation of privacy, 
allowing for greater utility of the released database.  
To implement this within the presented framework, 
the curator would issue a set of conditional statements 
(e.g., ``If you have value $v^\star$ for feature $1$, redact it'').  

One must be careful however, 
as the act of redacting a value itself contains some information.  
Indeed, 
in the example above, 
if only one user has value $v^\star$ for feature $1$, 
nothing has changed.  
This could be overcome by asking the other users to randomly redact their value for that feature as well.  

\subsection{A hierarchical protocol}
\label{sec:HierachicalProtocol}

Throughout this paper, 
we have considered a \mbox{two-step} process: 
the curator learns about the marginal distributions, 
then informs the users what to redact when giving their information to the analyst.  
The set of marginal distributions is a rather \mbox{coarse-grained} view of the full joint distribution. 
To obtain a more precise picture, the curator requires some knowledge of the correlations between different feature values.  
This could be accomplished by allowing the curator to ask a sequence of queries with increasing complexity.  
At each step, the curator uses their current understanding of the distribution to decide which queries are likely ``safe'' to ask.  

Concretely, 
consider quantifying correlations of increasing order.  
In the first round, 
the curator asks for individual feature values,
thereby learning the expected frequency of each (the mean).  
In the second round, 
the curator asks the users to provide specific \emph{pairs} of feature values,
thereby learning something about the \emph{covariance} between them.  
From the first round, 
the curator knows that some feature values are shared by only a few users.  
Asking about pairs involving those values is likely to cause significant privacy loss, 
so the curator specifically does not ask them of the users. 
Using covariance information from the second round, 
the curator asks for specific \emph{triples} of feature values.  
This process could then continue until there are no ``safe'' queries.  
Granted with this more \mbox{fine-grained} picture of the distribution, 
the curator would make a decision as to how the users should release their data to the analyst.

\subsection{Distributed private data aggregation}
\label{sec:distributedprivatedataagg}

By taking a statistical approach to reducing trust requirements, 
the doors open for many exciting applications. 
For instance, 
imagine a group of users, 
all with their own private data, 
would like to know something about their collective statistics without compromising the privacy of any individual.
The protocol in the previous sections could be used nearly verbatim, 
only now the participants, the curator, and the analyst, are all the same entity. 
With zero trust invested in anything but the protocol, such a method of distributed private data aggregation could prove to be a very useful tool.

\section{Coda}
\label{sec:coda}

To close, we remark that, 
as with any new idea, 
frameworks that claim to guarantee some level of privacy should be treated with caution.  
Does the metric capture the notion you are trying to quantify?  
It is difficult to judge ``how private'' something is if it is measured incorrectly. 
Are there proven bounds for this metric? 
Guarantees aren't worth much if they are frequently false. 
To combat misrepresentation, 
either malicious or accidental, 
it is imperative that the problems being solved are appropriately practical, 
and that metrics used to evaluate performance are appropriately statistical.


\bibliography{example_paper}
\bibliographystyle{icml2022}

\newpage
\appendix
\onecolumn

\section{Composition rule with known support size (Theorem~\ref{Thm:CompositionTheoremSupport})}

For this proof, we use the following two lemmas:

\begin{lemma}[\textbf{Logsum inequality}]
\label{lemma:logsum}
Let\mbox{$f\colon \mathbb{R}_+ \to \mathbb{R}$} be a function such that \mbox{$g(p) = p f(p)$} is a concave function. \\
If \mbox{\mbox{$((a_i,b_i))_{i=1}^n > 0$}}, then:
\begin{equation*}
    \sum_{i=1}^n a_i f\left(\frac{a_i}{b_i}\right) \leq
    \left( \sum_{i=1}^n a_i \right) f\left(
    \frac{\sum_{i=1}^n a_i}{\sum_{i=1}^n b_i}\right)
\end{equation*}

\begin{proof}
Let \mbox{$B= \sum_{i=1}^n b_i$}, then we have:
\begin{equation*}\sum_{i=1}^n a_i f\left(\frac{a_i}{b_i}\right)
= B \sum_{i=1}^n \frac{b_i}{B} \frac{a_i}{b_i} f\left(\frac{a_i}{b_i}\right) = B \sum_{i=1}^n \frac{b_i}{B} g\left(\frac{a_i}{b_i}\right)
\end{equation*}
Since $g$ is a concave function, 
we can use Jensen's inequality to bound the previous expression by: 
\begin{equation*}
\leq B g\left(\sum_{i=1}^n \frac{b_i}{B} \frac{a_i}{b_i}\right) = B g\left( \frac{1}{B}\sum_{i=1}^n a_i\right) = 
\left(\sum_{i=1}^n a_i\right)  f\left(
    \frac{\sum_{i=1}^n a_i}{\sum_{i=1}^n b_i}\right)
\end{equation*}
\end{proof}
\end{lemma}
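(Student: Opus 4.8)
The plan is to reduce the claimed inequality to a single application of Jensen's inequality for the concave function $g(p) = p f(p)$. The entire game is to rewrite the left-hand side as a weighted average of values of $g$, evaluated at the ratios $a_i/b_i$, with weights proportional to the $b_i$, so that concavity can be invoked exactly once.

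First I would introduce the normalizing constant $B = \sum_{i=1}^n b_i$, which is strictly positive since every $b_i > 0$. The key algebraic observation is that each summand factors as $a_i f(a_i/b_i) = b_i \cdot (a_i/b_i)\, f(a_i/b_i) = b_i\, g(a_i/b_i)$, directly from the definition $g(p) = p f(p)$. Pulling out $B$, the left-hand side becomes $B \sum_{i} (b_i/B)\, g(a_i/b_i)$, i.e.\ exactly $B$ times a convex combination of the values $g(a_i/b_i)$ with the probability weights $w_i = b_i/B$ (nonnegative and summing to one).

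Next I would apply Jensen's inequality: since $g$ is concave, $\sum_i w_i\, g(a_i/b_i) \le g\big(\sum_i w_i (a_i/b_i)\big)$. The entire reason for weighting by $b_i/B$ rather than, say, uniformly or by $a_i$, is that the averaged argument then collapses cleanly: $\sum_i (b_i/B)(a_i/b_i) = (1/B)\sum_i a_i$. Hence the left-hand side is at most $B\, g\big((\sum_i a_i)/B\big)$. Finally I would unwind $g$ back into $f$ at the point $p = (\sum_i a_i)/B$, giving $B \cdot \frac{\sum_i a_i}{B}\, f\big(\frac{\sum_i a_i}{B}\big) = \big(\sum_i a_i\big)\, f\big(\frac{\sum_i a_i}{\sum_i b_i}\big)$, which is precisely the right-hand side.

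I expect the only subtle point --- the real crux --- to be the choice of weights in the Jensen step: one must weight by $b_i/B$ so that the factorization $a_i f(\cdot) = b_i\, g(\cdot)$ and the telescoping of the averaged arguments happen simultaneously. Everything else is routine bookkeeping, and concavity of $g$ is the single hypothesis actually used. (As a sanity check, taking $f(p) = \log p$, for which $g(p) = p\log p$ is \emph{convex}, reverses the inequality and recovers the classical log-sum inequality, confirming that the stated direction is the correct one under the concavity assumption.)
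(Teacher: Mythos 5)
Your proof is correct and takes essentially the same route as the paper's: introduce $B = \sum_i b_i$, rewrite each summand as $b_i\, g(a_i/b_i)$ so the left-hand side is $B$ times a convex combination with weights $b_i/B$, and apply Jensen's inequality once to the concave function $g$. The identification of the $b_i/B$ weighting as the crux matches the paper's argument exactly, so there is nothing to add.
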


\begin{lemma}
\label{lemma:hq}
Let $G_q$ be a function \mbox{$G_q \colon [0,1] \to [0,1]$} defined as:
\begin{equation*}
G_q(p) = \min \bigg(1, \frac{q(1 - p)}{p(1-q)} \bigg)
\end{equation*}
Then, for every \mbox{$q \in [0,1]$}: 
\begin{itemize}
    \item The function $p \mapsto p G_q(p)$ is concave 
    \item  $p G_q(p) \geq p$ (for $p<q$), and \\
    $p G_q(p)<q$ (always).
\end{itemize}
\end{lemma}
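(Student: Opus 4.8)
The plan is to reduce everything to a closed piecewise-affine expression for $h(p) := p\,G_q(p)$, after which both the concavity and the two inequalities can be read off by inspection. First I would locate where the two arguments of the minimum cross: solving $\frac{q(1-p)}{p(1-q)} = 1$ gives $q(1-p) = p(1-q)$, i.e. $p = q$. Since the map $p \mapsto \frac{q(1-p)}{p(1-q)} = \frac{q}{1-q}\big(\tfrac1p - 1\big)$ is strictly decreasing on $(0,1)$, the fraction exceeds $1$ for $p < q$ and falls below $1$ for $p > q$. Hence $G_q(p) = 1$ on $[0,q]$ and $G_q(p) = \frac{q(1-p)}{p(1-q)}$ on $[q,1]$, which yields
\[
h(p) = p\,G_q(p) =
\begin{cases}
p & \text{if } p \le q,\\[2pt]
\dfrac{q(1-p)}{1-q} & \text{if } p \ge q.
\end{cases}
\]
Both branches agree at $p = q$ (each equals $q$), so $h$ is well defined and continuous.

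For concavity I would argue that $h$ is a continuous piecewise-affine function with a single kink at $p = q$: it has slope $1$ on $[0,q]$ and slope $-\frac{q}{1-q}$ on $[q,1]$. Because $-\frac{q}{1-q} \le 0 < 1$ for every $q \in (0,1)$, the slope is non-increasing across the kink, which is exactly the criterion for a continuous piecewise-affine function to be concave. The boundary values $q \in \{0,1\}$ make $h$ globally affine (identically $0$, resp. the identity on $[0,1)$), hence trivially concave, so they can be dispatched in a sentence.

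The inequalities then follow directly from the closed form. For $p < q$ we have $G_q(p) = 1$, so $p\,G_q(p) = p \ge p$ (with equality). For the upper bound, on $[0,q]$ we have $h(p) = p \le q$, while on $[q,1]$ we have $h(p) = \frac{q(1-p)}{1-q}$, which is decreasing in $p$ and attains the value $q$ exactly at $p = q$; thus $h(p) \le q$ throughout. The one subtlety I would flag is that the stated strict inequality $p\,G_q(p) < q$ holds strictly only for $p \ne q$, since $h(q) = q$; I would therefore state the conclusion as $h(p) \le q$ with equality iff $p = q$, which is what is actually needed downstream. This boundary bookkeeping (the crossover point $p=q$, and the degenerate $q \in \{0,1\}$) is really the only delicate part; once the piecewise form is established, nothing else requires computation.
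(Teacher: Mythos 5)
Your proof is correct and takes essentially the same route as the paper: your piecewise-affine closed form for $p\,G_q(p)$ is exactly the paper's one-line observation that $p\,G_q(p)=\min\bigl(p,\tfrac{q(1-p)}{1-q}\bigr)$ is a minimum of two affine functions that cross at $p=q$, which yields concavity and the maximum value $q$ at $p=q$. Your flag that the strict inequality $p\,G_q(p)<q$ fails at $p=q$ is also well taken --- the paper's own proof only establishes that the maximum equals $q$ (attained at $p=q$), and only the non-strict bound $p\,G_q(p)\le q$ is used downstream in the proof of Theorem~\ref{Thm:CompositionTheoremSupport}.
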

\begin{proof}
It is easy to see that \mbox{$p G_q(p) = \min \big(p, \frac{q(1-p)}{1-q} \big)$} is concave, as it is the minimum of two linear functions.\\ 
These two linear functions are equal at $q$, so the maximum of $p G_q(p)$ is $q$, with argument $p=q$. 
\end{proof}

Now, we first prove the theorem for two marginal distributions: creatively named ``$1$'' and ``$2$'':
\begin{equation}
\label{eq:comp_two}
\exposuresymbol_J(q_1 q_2) \leq \exposuresymbol_1(q_1) + \exposuresymbol_2(q_2) + q_2 |V_2|,
\end{equation}
where $|V_2|$ is the size of the support of marginal $2$. 
\begin{proof}
By definition of exposure we have:
\begin{align}
1 - \exposuresymbol_J(q_1 q_2) 
&= \sum_{v_1}\sum_{v_2} p(v_1, v_2) \indic{p(v_1, v_2) \geq q_1 q_2}\nonumber\\
&\geq \sum_{v_1}\sum_{v_2} p(v_1, v_2) \indic{p(v_1, v_2) \geq q_1 q_2}\indic{p(v_1) \geq q_1} 
\label{eq:initial_eq}
\end{align}
Notice that if $p(v_1, v_2)/p(v_1) \geq q_2$ and $p(v_1) \geq q_1$ then $p(v_1, v_2) \geq q_1 q_2$.\\ 
Thus, we can lower bound the above expression by 
\begin{flalign*}
    &\sum_{v_1}\sum_{v_2} p(v_1, v_2) \indic{\frac{p(v_1, v_2)}{p(v_1)} \geq q_2 }\indic{p(v_1) \geq q_1} \\
    &= \sum_{v_1}\sum_{v_2} p(v_1, v_2) ( 1 - \indic{\frac{p(v_1, v_2)}{p(v_1)} < q_2 }) ( 1 - \indic{p(v_1) < q_1}) & \\
    &=1 + \sum_{v_1}\sum_{v_2} p(v_1, v_2) \left( \indic{\frac{p(v_1, v_2)}{p(v_1)} < q_2 }\indic{p(v_1) < q_1}  - \indic{\frac{p(v_1, v_2)}{p(v_1)} < q_2 } - \indic{p(v_1) < q_1}\right)&\\
    &\geq 1 - \sum_{v_1} p(v_1)\indic{p(v_1)< q_1} 
    - \sum_{v_1} \sum_{v_2} p(v_1, v_2) \indic{\frac{p(v_1, v_2)}{p(v_1)} < q_2 }&\\
    &=1 - \exposuresymbol_1(q_1)
    - \sum_{v_1} \sum_{v_2} p(v_1, v_2) \indic{\frac{p(v_1, v_2)}{p(v_1)} < q_2 },&
\end{flalign*}
where we used the fact that $\sum_{v_2}p(v_1, v_2)= p(v_1)$ for the second to last equality.

Combining this bound with \eqref{eq:initial_eq} and rearranging terms we have:
\begin{equation}
\label{eq:conditional_bound}
    \exposuresymbol_J(q_1 q_2) \leq \exposuresymbol_1(q_1) + \sum_{v_1} \sum_{v_2} p(v_1, v_2) \indic{\frac{p(v_1, v_2)}{p(v_1)} < q_2 }
\end{equation}
By Lemma~\ref{lemma:hq}, we have the following upper bound:
\begin{align*}
   \sum_{v_1} \sum_{v_2} p(v_1, v_2) \indic{\frac{p(v_1, v_2)}{p(v_1)} < q_2 } &=
   \sum_{v_2} \sum_{v_1} p(v_1, v_2) \indic{\frac{p(v_1, v_2)}{p(v_1)} < q_2 } \\
   &\leq \sum_{v_2}\sum_{v_1} p(v_1, v_2) G_{q_2}\left( \frac{p(v_1, v_2)}{p(v_1)}\right)
\end{align*}

We can now apply Lemma~\ref{lemma:logsum} and use the fact that $\sum_{v_1}p(v_1) = 1$ and $\sum_{v_1} p(v_1, v_2) = p(v_2)$ to upper bound the previous expression by:
\begin{equation*}
    \sum_{v_2} p(v_2)G_{q_2}(p(v_2)) = 
    \sum_{v_2} p(v_2)G_{q_2}(p(v_2)) \indic{p(v_2) < q_2} + 
    \sum_{v_2} p(v_2)G_{q_2}(p(v_2)) \indic{p(v_2) \geq q_2}
\end{equation*}

However, we know that $p(v_2)G_{q_2}(p(v_2))= p(v_2)$ for $p(v_2) < q_2$ and we can upper bound $p(v_2) G_{q_2}(p(v_2))$ by $q_2$. 
Therefore, we have: 
\begin{align*}
    \sum_{v_2} p(v_2)G_{q_2}(p(v_2))& \leq 
    \sum_{v_2} p(v_2)\indic{p(v_2)< q_2} + q_2 \indic{p(v_2) \geq q_2} \\
     &\leq \sum_{v_2} p(v_2)\indic{p(v_2)< q_2} + q_2 |V_2|= \exposuresymbol_2(q_2) + q_2 |V_2|
\end{align*}
Equation~\ref{eq:comp_two} can be obtained by replacing this bound in \eqref{eq:conditional_bound}.

To prove the general case, we use induction as follows.\\ 
Let $J$ denote an arbitrary index set of size $n \geq 2$, and let $j^*, j'$ denote two arbitrary elements of $J$.\\  
By equation \eqref{eq:comp_two}, we know that 
\begin{equation*}
    Q_J\left(\prod_{j \in J} t_j\right) \leq 
    Q_{J / \{j'\}}\left(\prod_{j \in J/\{j'\}} t_j)\right) + Q_{j'}(t_{j'}) +  t_{j'}|V_{j'}|.
\end{equation*}
Since $j^* \in J/\{j'\}$, 
we can apply induction to the first term on the right hand side of the above equation to obtain:
\begin{equation*}
    Q_J\left(\prod_{j \in J} t_j\right) \leq 
    \sum_{j \in J / \{j'\}} Q_j(t_j) + \sum_{j \in J/\{j'\}\colon j \neq j^*}t_j |V_j| +  Q_{j'}(t_j') +  t_{j'}|V_{j'}|.
\end{equation*}
The proof follows by rearranging terms in the above expression.

\end{proof}

\section{General composition rule (Theorem~\ref{Thm:CompositionTheoremC})}
\begin{proof}
Let \mbox{$f_j(i) = M_{ij}$} be the function that returns the value of column $j$ for row $i$.\\
Assume \mbox{$J = \{1, \ldots, k\}$}, and let \mbox{$f_1 \times \cdots \times f_k$} denote the function \mbox{$u \mapsto (f_1(u), \ldots, f_k(u))$}.\\
For any function $f$ with domain $U$, let \mbox{$V_f = \{f(u) : u \in U\}$} be the range of $f$.\\
Let \mbox{$g = f_1 \times \cdots \times f_k$}.

To prove the theorem, it suffices to show that if
\[
\left|\left\{u \in U : \probdist_i(f_i(u)) \ge \frac{1}{2^b_i}\right\}\right| \ge (1 - \delta_i)n.
\]
for all $i$, then
\[
\left|\left\{u \in U : \probdist_J(g(u)) \ge \frac{c}{2^{\sum_i b_i}}\right\}\right| \ge (1 - \sum_i \delta_i - c)n.
\]

Let \mbox{$V^+_i = \left\{v \in V_i : \probdist_i(v) \ge \frac{1}{2^{b_i}}\right\}$} and \mbox{$U^+_i = \{u \in U : f_i(u) \in V^+_i\}$}. \\
Clearly 
\begin{align*}
\mbox{$|V^+_i| \le 2^{b_i}$} \text{ and  } \mbox{$|U^+_i| \ge (1 - \delta_i)n$}.
\end{align*}

Let\mbox{ $V^+_g = (V^+_1 \times \cdots \times V^+_k) \cap V_g$} and \mbox{$U^+_g = \{u \in U : g(u) \in V^+_g\}$}. \\We immediately have
\begin{align*}
\mbox{$|V^+_g| \le 2^{\sum_i b_i}$} \text{ and  } \mbox{$|U^+_g| \ge \left(1 - \sum_i \delta_i\right)n$}
\end{align*}
the latter by taking a union bound.

Let \mbox{$V^-_g = \left\{v \in V^+_g : \probdist_J(v) < \frac{c}{2^{\sum_i b_i}}\right\}$} and \mbox{$U^-_g = \{u \in U^+_g : g(u) \in V^-_g\}$}.\\
Since \mbox{$|V^+_g| \le 2^{\sum_i b_i}$}, we have 
\begin{align*}
\mbox{$|U^-_g| = \left(\sum_{v \in V^-_g} \probdist_J(v)\right)n \le cn$}.
\end{align*}
Thus:
\[
\left|\left\{u \in U : p_J(g(u)) \ge \frac{c}{2^{\sum_i b_i}}\right\}\right| \ge |U^+_g| - |U^-_g| \ge \left(1 - \sum_i \delta_i - c\right)n
\]
completing this proof. 
\end{proof}

\section{Tightness of these composition theorems (Theorem~\ref{Thm:SlackNecessaryOnComposition})}
\begin{proof}
Let \mbox{$a = \frac{3}{c}$}, which is an integer by assumption.\\
Let $M$ be a matrix with $a2^k$ rows and $k$ columns, where $k$ is a positive integer whose value will be specified below.\\
And, let $J = \{1, \ldots, k\}$ and $J_i = \{i\}$ for each $i \in J$. In other words, $J$ contains all the columns of $M$.

For simplicity, assume that the values of the entries of $M$ belong to the set \mbox{$\{0, 1, \perp\}$} ($\perp$ for redacted). \\
Then, partition the $a2^k$ rows of $M$ into $2^k$ groups, with $a$ rows per group. All the rows in each group will contain identical values for every column, with $k$ exceptions. \\
Specifically, number the groups from $0$ to $2^k - 1$, and assign the binary encoding of $i$ to the $i^{\text{th}}$ group, with one bit per column. 
However, for $k$ arbitrarily chosen groups, replace one of the columns in one of the rows with $\perp$, choosing a different column for each row.

Observe that, in each column, $a2^{k - 1}$ rows are assigned one of the values in $\{0, 1\}$, and $a2^{k - 1} - 1$ rows are assigned the other value, with the remaining row assigned $\perp$. Thus
\[
\exposuresymbol_i\left(\frac{a2^{k-1} - 1}{a2^k}\right) = \frac{1}{a2^k}
\]
for all $i \in J$.

Also, since $a \ge 3$, we know that exactly $k$ rows are unique (specifically, the $k$ rows containing a $\perp$), and thus
\[
\exposuresymbol_J\left(\frac{2}{a2^k}\right) = \frac{k}{a2^k}.
\]

Now, choose $k$ large enough so that
\[
c\left(\frac{a2^{k - 1} - 1}{a2^k}\right)^k \ge \frac{2}{a2^k},
\]
which is possible since the limit of the ratio of the two sides of this inequality is less than $1$:
\[
\lim_{k \rightarrow \infty} \frac{\frac{2}{a2^k}}{c\left(\frac{a2^{k - 1} - 1}{a2^k}\right)^k} = \lim_{k \rightarrow \infty}  \frac{\frac{1}{a2^{k-1}}}{c\left(\frac12 - \frac{1}{a2^k}\right)^k} = \lim_{k \rightarrow \infty} \frac{1}{ac} \cdot \frac{1}{\left(\frac12 - \frac{1}{a2^k}\right)} \cdot \left(\frac{\frac12}{\frac12 - \frac{1}{a2^k}}\right)^{k-1} = \frac13 \cdot 2 \cdot 1 = \frac23.
\]

Let $q_i = \frac{a2^{k -1} - 1}{a2^k}$ for all $i \in J$.\\
Putting everything together, we have
\vspace{-3pt}
\begin{gather*}
\exposuresymbol_J\left(c\prod_i q_i\right) = \exposuresymbol_J\left(c\left(\frac{a2^{k - 1} - 1}{a2^k}\right)^k\right) \ge \exposuresymbol_J\left(\frac{2}{a2^k}\right) = \frac{k}{a2^k}\\ = \sum_i \frac{1}{a2^k} = \sum_i \exposuresymbol_i\left(\frac{a2^{k - 1} - 1}{a2^k}\right) = \sum_i \exposuresymbol_i(q_i)
\end{gather*}
\vspace{-3pt}
where the inequality follows because $\exposuresymbol_J(q)$ is monotonically non-decreasing in $q$.
\end{proof}

\section{Derivation of the statistical exposure}
\label{Appendix:statistical_exposure}

First, recall that the cumulative distribution of a Binomial distribution with parameters $n$ and $p$ is:
\begin{align*}
    F(k; n, p) = P(X \leq k) = \sum_{i=0}^{k} \binom{n}{i} p^{i}(1-p)^{n-i} = I_{1-p}\big(n-k,k+1 \big). \label{eq:CDFBinomial}
\end{align*}

The statistical exposure $\smoothexposuresymbol_{\probdist}(n,k)$  is defined as the probability that a random user in a database composed of $n$ users sampled i.i.d. from $\probdist$ is less than \mbox{$k$-anonymous}.  
That is: 
\vspace{-4pt}
\begin{align*}
      \smoothexposuresymbol_{\probdist}(n,k) &= \frac{1}{n} \sum_{i=1}^{|V|}  \sum_{j=0}^{k-1} \binom{n}{j}  j   p_i^{j} (1-p_i)^{n-j} \\
      &= \frac{1}{n} \sum_{i=1}^{|V|}  \sum_{j=1}^{k-1} j\frac{n!}{j!(n-j)!} p_i^{j} (1-p_i)^{n-j} \\
      &= \frac{1}{n} \sum_{i=1}^{|V|}  \sum_{j=1}^{k-1} n \frac{(n-1)!}{(j-1)!(n-j)!}  p_i^{j} (1-p_i)^{n-j}  \\
      &=  \sum_{i=1}^{|V|}  \sum_{j=1}^{k-1}  \binom{n-1}{j-1}  p_i^{j} (1-p_i)^{n-j}  
\end{align*}
where on the second line we start the sum over $j$ at $1$ as the term associated with $j=0$ is zero. 

Let \mbox{$j' = j-1$} and \mbox{$n' = n-1$}:
\vspace{-4pt}
\begin{align*}
    \smoothexposuresymbol_{\probdist}(n,k) &= \sum_{i=1}^{|V|}  \sum_{j'=0}^{k-2}  \binom{n'}{j'}  p_i^{j'+1} (1-p_i)^{(n'-1)-(j'-1)}  \\
      &= \sum_{i=1}^{|V|}  \sum_{j'=0}^{k-2}  p_i \binom{n'}{j'}  p_i^{j'} (1-p_i)^{(n'-j')}  \\
      & = \sum_{i=1}^{|V|}   p_i I_{1-p_i}\big( n'-(k-2), (k-2) + 1\big) \\
      & = \sum_{i=1}^{|V|}   p_i I_{1-p_i}\big( (n-1)-(k-2), k-1\big) \\
      & = \sum_{i=1}^{|V|}   p_i I_{1-p_i}\big( n-(k-1), k-1\big).
\end{align*}

\section{Error bound for the \mbox{plug-in} estimator of the exposure (Theorem~\ref{thm:plugin-exposure})}

\begin{proof}
Note that if $|\probdist_j(v) - \widehat{\probdist}_j(v)| \le \gamma$ for all $v \in V_j$ then
\begin{align*}
\exposuresymbol_j(t) &= \sum_{v \in V_j} \probdist_j(v) \cdot \indic{\probdist_j(v) < t}\\
&\le \sum_{v \in V_j} (\widehat{\probdist}_j(v) + \gamma) \cdot \indic{\widehat{\probdist}_j(v) - \gamma < t}\\
&= \sum_{v \in V_j} \widehat{\probdist}_j(v) \cdot \indic{\widehat{\probdist}_j(v) < t + \gamma} +  \gamma  \sum_{v \in V_j} \indic{\widehat{\probdist}_j(v) < t + \gamma}\\
&\le \widehat{\exposuresymbol}_j(t + \gamma) + \gamma |V_j|
\end{align*}
and
\begin{align*}
\exposuresymbol_j(t) &= \sum_{v \in V_j} \probdist_j(v) \cdot \indic{\probdist_j(v) < t}\\
&\ge \sum_{v \in V_j} (\widehat{\probdist}_j(v) - \gamma) \cdot \indic{\widehat{\probdist}_j(v) + \gamma < t}\\
&= \sum_{v \in V_j} \widehat{\probdist}_j(v) \cdot \indic{\widehat{\probdist}_j(v) < t - \gamma} - \gamma \sum_{v \in V_j} \indic{\widehat{\probdist}_j(v) < t - \gamma}\\
&\ge \widehat{\exposuresymbol}_j(t - \gamma) - \gamma |V_j|.
\end{align*}
Moreover, by Hoeffding's inequality and the union bound, we have
\[
\Pr\left[\max_{v \in V_j} |\probdist_j(v) - \widehat{\probdist}_j(v)| > \gamma\right] \le |V_j|\exp(-2m\gamma^2). \qedhere
\]
\end{proof}

\section{Hardness of estimating the exposure using any estimator (Theorem~\ref{thm:lower})}

We use the following form of Le Cam's theorem \cite{le1960approximation}:
\begin{theorem}\label{thm:lecam}
Suppose there exist $\probdist_0$ and $\probdist_1$ from some parametric family of distributions $\mathcal{P}$ such that $\text{KL} ( \probdist_0\vert \vert \probdist_1 ) \le \tfrac{\log 2}{n}$. 
Then
\[
R_n = \inf_{\widehat{\theta} } \sup_{\probdist\in \mathcal{P}} \mathbb{E} [d(\widehat{\theta}(X_1, \dots, X_n ), \theta ( \probdist ) ) ] \ge \frac{d( \theta ( \probdist_0 ), \theta ( \probdist_1) )}{16}
\]
where $d$ is a semi-metric.
\end{theorem}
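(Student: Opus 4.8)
The plan is to prove this as the classical two-point (Le Cam) reduction: convert the estimation lower bound into a binary hypothesis-testing lower bound, and then bound the testing error by the statistical closeness of the two hypotheses as measured through the KL divergence. Write $\theta_0 = \theta(\probdist_0)$ and $\theta_1 = \theta(\probdist_1)$, let $\probdist_0^{\otimes n}, \probdist_1^{\otimes n}$ denote the laws of the full sample $(X_1,\dots,X_n)$ under each hypothesis, let $\E_j$ denote expectation under $\probdist_j^{\otimes n}$, and set $\tau = \tfrac12 d(\theta_0,\theta_1)$. Since $\probdist_0,\probdist_1 \in \mathcal{P}$, I would first restrict the supremum over $\mathcal{P}$ to these two points and lower bound the maximum of two nonnegative numbers by their average, giving
\[
R_n \ge \inf_{\widehat\theta}\,\tfrac12\Big(\E_0[d(\widehat\theta,\theta_0)] + \E_1[d(\widehat\theta,\theta_1)]\Big).
\]
Applying $\E[Y]\ge \tau\,\Pr(Y\ge\tau)$ to each term at level $\tau$ yields $\E_j[d(\widehat\theta,\theta_j)] \ge \tau\,\probdist_j^{\otimes n}\big(d(\widehat\theta,\theta_j)\ge \tau\big)$.

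The key structural step uses the triangle inequality satisfied by the semi-distance $d$: since $d(\theta_0,\theta_1)=2\tau$, whenever $d(\widehat\theta,\theta_0)<\tau$ we must have $d(\widehat\theta,\theta_1)>\tau$, so the event $\{d(\widehat\theta,\theta_0)<\tau\}$ is contained in $\{d(\widehat\theta,\theta_1)\ge\tau\}$. Reading $\psi = \indic{d(\widehat\theta,\theta_0)\ge\tau}$ as a test of $\probdist_0$ against $\probdist_1$, its two error probabilities are $\probdist_0^{\otimes n}\big(d(\widehat\theta,\theta_0)\ge\tau\big)$ and $\probdist_1^{\otimes n}\big(d(\widehat\theta,\theta_0)<\tau\big) \le \probdist_1^{\otimes n}\big(d(\widehat\theta,\theta_1)\ge\tau\big)$. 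Thus the sum of the two probabilities appearing above is at least the minimal total testing error, which by the Neyman--Pearson characterization equals $1-\|\probdist_0^{\otimes n}-\probdist_1^{\otimes n}\|_{\mathrm{TV}}$. Combining the last two paragraphs gives
\[
R_n \ge \tfrac12\,\tau\,\big(1-\|\probdist_0^{\otimes n}-\probdist_1^{\otimes n}\|_{\mathrm{TV}}\big) = \tfrac{d(\theta_0,\theta_1)}{4}\big(1-\|\probdist_0^{\otimes n}-\probdist_1^{\otimes n}\|_{\mathrm{TV}}\big).
\]

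Finally I would control the total-variation term via the KL hypothesis. By tensorization of the KL divergence, $\mathrm{KL}(\probdist_0^{\otimes n}\|\probdist_1^{\otimes n}) = n\,\mathrm{KL}(\probdist_0\|\probdist_1) \le \log 2$, and Pinsker's inequality then yields $\|\probdist_0^{\otimes n}-\probdist_1^{\otimes n}\|_{\mathrm{TV}} \le \sqrt{\tfrac12\log 2} < \tfrac34$. Substituting $1-\|\cdot\|_{\mathrm{TV}} \ge \tfrac14$ into the displayed bound gives $R_n \ge \tfrac{d(\theta_0,\theta_1)}{16}$, exactly as claimed. The main obstacle is the estimation-to-testing reduction itself: correctly using the triangle inequality of the semi-distance to cover the sample space by the two far events, and identifying the minimal sum of testing errors with $1-\mathrm{TV}$. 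Once that identity is in place, the KL-to-TV bound (Pinsker, or equivalently Bretagnolle--Huber, which gives $\|\cdot\|_{\mathrm{TV}}\le\sqrt{1-e^{-\log 2}} = \tfrac{1}{\sqrt2} < \tfrac34$) is routine, and the slack in the constant is comfortable enough that the precise choice of inequality is immaterial.
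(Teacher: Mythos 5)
Your proof is correct, but there is nothing in the paper to compare it against: the paper imports this statement as a black-box tool, citing \cite{le1960approximation}, and never proves it --- it is the standard two-point lemma of Le Cam's method (in the form popularized by Tsybakov and by Wasserman's lecture notes, where the constant $1/16$ and the KL budget $\log 2/n$ both appear). Your blind derivation is exactly the canonical argument: restrict the supremum to $\{\probdist_0,\probdist_1\}$ and average; apply Markov at level $\tau=\tfrac12 d(\theta_0,\theta_1)$; use the triangle inequality of the semi-distance to turn the estimator into a test $\psi=\indic{d(\widehat\theta,\theta_0)\ge\tau}$ whose summed errors dominate $\inf_\psi[\probdist_0^{\otimes n}(\psi=1)+\probdist_1^{\otimes n}(\psi=0)]=1-\|\probdist_0^{\otimes n}-\probdist_1^{\otimes n}\|_{\mathrm{TV}}$; then tensorize the KL divergence and apply Pinsker. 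The constants check out: $\mathrm{KL}(\probdist_0^{\otimes n}\Vert\probdist_1^{\otimes n})\le\log 2$ gives $\|\cdot\|_{\mathrm{TV}}\le\sqrt{(\log 2)/2}\approx 0.589<3/4$, so $R_n\ge\tfrac{d(\theta_0,\theta_1)}{4}\cdot\tfrac14$. Two small points worth making explicit: the tensorization step silently uses that $X_1,\dots,X_n$ are i.i.d.\ draws from $\probdist$, which the theorem statement leaves implicit but which matches the minimax risk defined in the paper's Theorem~\ref{thm:lower}; and your reduction genuinely needs the triangle inequality, so ``semi-metric'' must be read as Tsybakov's semi-distance (symmetric, triangle inequality, possibly degenerate) --- this is harmless here since the paper instantiates $d(x,y)=|x-y|$. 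With those caveats noted, your argument is a complete and correct proof of a statement the paper only cites.
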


\begin{proof}
Let us fix $d$ as $d(x,y) = |x-y|$ and let $\mathcal{P} = \Delta_{s+1}$ be the set of discrete distributions represented by the $s+1$ dimensional probability simplex.\\ 
Next, we define $\probdist_0 = (p_{0,1}, \dots,  p_{0,s+1} )$ and $\probdist_1 = (p_{1,1}, \dots,  p_{1,s+1} )$ as
\[
p_{0,i} =
\begin{cases}
\frac{1}{s^2}-\varepsilon & \text{~if~} i \in [s] \\
\frac{s-1+s^2\varepsilon}{s} & \text{~if~} i = s+1
\end{cases}
\]
and
\[
p_{1,i} =
\begin{cases}
\frac{1}{s^2} & \text{~if~} i \in [s] \\
\frac{s-1}{s} & \text{~if~} i = s+1
\end{cases}
\]
where $\varepsilon \le  \tfrac{1}{s^2}$. \\
We use 
\[
q = \frac{1}{s^2}-\frac{\varepsilon}{2} \enspace 
\]
with this at hand, we can show that the L1 difference of exposure for $\probdist_0$ and $\probdist_1$ is 
\[
d( \exposure{\probdist_0}{q}, \exposure{\probdist_1}{q}) = \left\vert \frac{1}{s} - s\varepsilon \right\vert
\]
since $p_{0,s+1}>q$ as
\begin{align*}
    p_{0,s+1} - q 
    & = \frac{s-1+s^2\varepsilon}{s} - \frac{1}{s^2}+\frac{\varepsilon}{2} \\
    & \ge \frac{s^2 -s - 1}{s^2}  \\
    & = 1- \frac{s+1}{s^2}
\end{align*}
which is positive when $s\ge2$ and $\exposure{\probdist_1}{q} = 0$ as $\tfrac{s-1}{s} \ge q = \tfrac{1}{s^2} - \tfrac{\varepsilon}{2}$. 

Next, we upper bound the KL divergence as
\begin{align}
\text{KL} (\probdist_0\vert \vert\probdist_1 ) 
    &= s \left(\frac{1}{s^2} - \varepsilon \right) \log \frac{1/s^2 - \varepsilon}{1/s^2} + \frac{s-1+s^2\varepsilon}{s} \log \left(\frac{s-1+s^2\varepsilon}{s} \cdot \frac{s}{s-1}\right) \notag \\
    & = \left( \frac{1}{s} - s\varepsilon \right) \log (1-s^2\varepsilon ) + \left(1+s\varepsilon - \frac{1}{s} \right) \log \left( 1 + \frac{s^2\varepsilon}{s-1}\right) \notag \\
    & = \left( \frac{1}{s} - s\varepsilon \right) \left( \log (1-s^2\varepsilon ) - \log \left( 1 + \frac{s^2\varepsilon}{s-1}\right) \right) + \log \left( 1 + \frac{s^2\varepsilon}{s-1}\right) \notag \\
    & = \left( \frac{1}{s} - s\varepsilon \right) \log \left( 1-\frac{s^3\varepsilon}{s-1+s^2\varepsilon} \right) + \log \left( 1 + \frac{s^2\varepsilon}{s-1}\right) \notag \\
    & \le \left( \frac{1}{s} - s\varepsilon \right)\frac{s^3\varepsilon}{1-s-s^2\varepsilon} + \frac{s^2\varepsilon}{s-1} \label{eq:aux_klupper} \\
    & \le \frac{s^2\varepsilon}{1-s-s^2\varepsilon} + \frac{s^2\varepsilon}{s-1} \notag \\
    & =  s^2 \varepsilon \left( \frac{1}{s} + \frac{1}{s-1} \right)\notag \\
    & \le  \frac{2s^2 \varepsilon}{s-1} \notag
\end{align}
where (\ref{eq:aux_klupper}) follows from the fact that $\log(1+x) \le x$ for $x> -1$. 

Setting \mbox{$\varepsilon = \frac{s-1}{s^2 n}$} allows us to apply Theorem \ref{thm:lecam}, and we are done.
\end{proof}

\section{Error bound for the \mbox{plug-in} estimator of the statistical exposure (Theorem~\ref{thm:approximation_smooth})}
\label{prooftheorem6}
\begin{proof}
We start by noticing that the regularized incomplete betafunction \mbox{$I_{1 - p_i}(n -(k-1), k-1) := F(k; n, p)_i)$} corresponds to the  cumulative distribution of a binomial random variable $\text{Bin}(n, p_i)$ with parameters $n$ and $p_i$.\\
Thus, by definition of $\smoothexposuresymbol$ we have:
\begin{align*}
    \smoothexposuresymbol_{\probdist}(k, n) - \smoothexposuresymbol_{\widehat{\probdist}}(k,n) &= 
    \sum_{i=1}^{|V|} p_i F(k; n, p)_i) - \widehat{p_i} F(n, k, \widehat{p}_i) \\
    &=\sum_{i=1}^{|V|} p_i \left(F(k; n, p)_i) - F(n, k, \widehat{p}_i) \right)
    + F(n, k, \widehat{p}_i) (p_i - \widehat{p}_i)\\
\end{align*}
\vspace{-4pt}
So, using the fact that \mbox{$F(k; n, p_i)) \leq 1$}, we have that 
\begin{align}
    |\smoothexposuresymbol_{\probdist}(k, n) - \smoothexposuresymbol_{\widehat{\probdist}}(k,n)|
    & \leq 
    \sum_{i=1}^{|V|} p_i \left| F(k; n, p)_i) - F(n, k, \widehat{p}_i) \right|
     + \|\probdist - \widehat{\probdist}\|_1 \nonumber\\
    & \leq 
    \sum_{i=1}^{|V|} p_i \left| F(k; n, p)_i) - F(n, k, \widehat{p}_i) \right|
     + |V|\|\probdist - \widehat{\probdist}\|_\infty \label{eq:qpbound}
\end{align}
We now proceed to bound the first term in the above equation, by Section 2.2 of \cite{AdJo06}, we have:
\begin{align*}
    |F(k; n, p)_i) - F(n, k, \widehat{p}_i| &\leq d_{\text{TV}}(\text{Bin}(n, p_i), \text{Bin}(n, \widehat{p}_i)) \\
    &\leq \frac{\sqrt{e}}{2} \frac{\tau(|p_i - \widehat{p}_i|)}{1 - \tau(|p_i - \widehat{p}_i|)^2} \\
   & \leq \frac{\sqrt{e}}{2} \tau(|p_i - \widehat{p}_i|).
\end{align*}
where $d_{\text{TV}}$ denotes the total variation distance and \mbox{$\tau(x) := x \sqrt{\frac{n+1}{p_i ( 1- p_i)}}$}.

Replacing this bound in \eqref{eq:qpbound}, we obtain:
\begin{align*}
    |\smoothexposuresymbol_{\probdist}(k, n) - \smoothexposuresymbol_{\widehat{\probdist}}(k,n)|
    &\leq \frac{\sqrt{e}}{2}\sum_{i=1}^{|V|} \sqrt{\frac{p_i (n+1)}{1 - p_i}}|p_i - \widehat{p_i}| + |V|\|\probdist - \probdist\|_\infty \\
    &\leq \frac{\sqrt{e}}{2} \|\probdist - \widehat{\probdist}\|_\infty \left( \sum_{i=1}^{|V|} \sqrt{\frac{p_i (n+1)}{1 - p_i}} + |V| \right) \\
    &\leq  |V|\left( \frac{\sqrt{e (n + 1)}}{2\sqrt{|V| - 1}} + 1 \right) \|\probdist - \widehat{\probdist}\|_\infty
    \end{align*}
\end{proof}

\section{Relationships between exposure and entropy}
\label{appendix:connectionentropy}
\subsection{Proof of Proposition~\ref{prop:entropyconnection}}
\begin{proof}
By definition of exposure $Q(t)$ we have:
\begin{align*}
  \int_0^1 \frac{Q(t)}{t}dt &= \int_0^1 \frac{1}{t} \sum_{i=1}^n p_i \indic{p_i < t}dt \\ 
  &=\sum_{i=1}^n p_i \int_0^1 \frac{\indic{p_i < t}}{t} dt\\
  &=\sum_{i=1}^n p_i \int_{p_i}^1 \frac{1}{t}dt = -\sum_{i=1}^n p_i \log p_i = H(\probdist)
  \end{align*}
This proves the first statement of the proposition. 
To prove the upper bound on the exposure we use the fact that
\begin{align*}
    Q(t) &= P_{I \sim \probdist}(p_I < t) 
    =P_{I \sim \probdist}(-\log(p_I) > -\log t) \\
    & \leq\frac{ E_{I \sim \probdist}[-\log p_I]}{- \log t} 
    = \frac{H(\probdist)}{- \log t},
\end{align*}
where we have used Marokv's inequality. 
\end{proof}
We now show that the above bound is tight.

\begin{proposition}
Let \mbox{$B> 0$, $ 1>  t> 0$} be such that \mbox{$-\frac{B}{\log t} < 1 - t$} and \mbox{$-\frac{B}{t \log t}$} is an integer greater or equal to $1$.\\
There exists a distribution $\probdist$ such that  
\begin{align*}
H(\probdist) \leq  B + \frac{1}{e} \quad  \text{ and } \quad Q(t) = \frac{B}{- \log t}.  
\end{align*}
\end{proposition}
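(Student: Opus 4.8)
The plan is to exhibit an explicit extremal distribution, guided by the equality case of the Markov bound used in Proposition~\ref{prop:entropyconnection}. There the key step was $Q(t) = P_{I\sim\probdist}(-\log p_I > -\log t) \le \frac{\E[-\log p_I]}{-\log t} = \frac{H(\probdist)}{-\log t}$, and Markov is tight precisely when $-\log p_I$ is concentrated on $\{0,\,-\log t\}$, i.e.\ when every atom has probability either $1$ or exactly $t$. This suggests packing many atoms right at the threshold $t$ and absorbing the leftover mass into one large ``safe'' atom.

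Concretely, I would set $m = -\frac{B}{t\log t}$, which is a positive integer by hypothesis (and positive since $\log t < 0$), and let $\probdist$ consist of $m$ atoms each of probability $t$ together with one atom of probability $r = 1 - \frac{B}{-\log t}$. First I would verify this is a valid distribution: the hypothesis $-\frac{B}{\log t} < 1 - t$ rearranges to $r > t$, so the safe atom is genuinely unexposed and $r<1$, while $B>0$ gives $r<1$; and the total mass is $mt + r = \frac{B}{-\log t} + \bigl(1 - \frac{B}{-\log t}\bigr) = 1$, using $mt = -\frac{B}{\log t}$.

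The two quantities are then immediate. Each threshold atom is exposed and the safe atom (with $r>t$) is not, so $Q(t) = mt = \frac{B}{-\log t}$ exactly. For the entropy, $H(\probdist) = m\,(-t\log t) - r\log r = -mt\log t - r\log r = B - r\log r$, where $-mt\log t = \frac{B}{-\log t}\cdot(-\log t) = B$. It then remains only to bound $-r\log r \le \frac1e$, which follows from the elementary fact $\max_{x\in[0,1]}(-x\log x) = \frac1e$, attained at $x=1/e$; hence $H(\probdist) \le B + \frac1e$.

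The one subtle point, and the main obstacle, is the \emph{strict} threshold in $Q(t) = \sum_v \probdist(v)\,\indic{\probdist(v) < t}$: atoms placed exactly at $t$ are not counted. I would resolve this by an infinitesimal perturbation that preserves the exposure exactly: replace the $m$ atoms at $t$ by $m$ atoms at probability $t-\epsilon$ plus one extra atom at probability $m\epsilon$ (all strictly below $t$ for small $\epsilon$), so the exposed mass is still $m(t-\epsilon) + m\epsilon = mt = \frac{B}{-\log t}$ while the safe atom is unchanged. By continuity of $-x\log x$, letting $\epsilon\to 0^+$ sends $H(\probdist)$ back to $B - r\log r \le B + \frac1e$, so a sufficiently small $\epsilon$ yields a genuine distribution meeting both requirements. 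Controlling the sign and size of the $O(\epsilon\log\epsilon)$ corrections so that the entropy does not overshoot $B+\frac1e$ (a concern only when $r$ is near $1/e$) is the sole genuinely fiddly verification.
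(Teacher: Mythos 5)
Your construction coincides exactly with the paper's: $m=-\frac{B}{t\log t}$ atoms of probability $t$ plus one atom of probability $r=1+\frac{B}{\log t}>t$, entropy $B-r\log r\le B+\frac1e$ via $\max_{x\in[0,1]}(-x\log x)=\frac1e$. The paper stops there: it places the small atoms at \emph{exactly} $t$, counts them as exposed, and makes no perturbation --- i.e., it implicitly reads the indicator in Definition~\ref{def:exposure} as $\indic{\probdist(v)\le t}$ at the threshold. So the boundary issue you flagged is real (under the strict inequality the unperturbed construction literally gives $Q(t)=0$), but the paper resolves it by convention rather than by analysis.

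The step you deferred as ``fiddly'' is a genuine gap, and it cannot be closed. Write $f(x)=-x\log x$; your perturbed entropy is $H_\epsilon=m f(t-\epsilon)+f(m\epsilon)+f(r)$, and
\[
\frac{d}{d\epsilon}\Bigl[m f(t-\epsilon)+f(m\epsilon)\Bigr]=m\log\frac{t-\epsilon}{m\epsilon}>0
\quad\text{for } 0<\epsilon<\tfrac{t}{m+1},
\]
so splitting mass \emph{strictly increases} entropy: $H_\epsilon>B+f(r)$ for every admissible $\epsilon$, with an uncontrollable $+\,\epsilon\log(1/\epsilon)$ correction. Hence whenever $r=\frac1e$ the required bound $H_\epsilon\le B+\frac1e$ fails for \emph{all} $\epsilon>0$, and this case is compatible with the hypotheses: take $m=2$, $t=\frac{1-1/e}{2}<\frac1e$, $B=(1-\frac1e)\log\frac1t$, giving $r=\frac1e$. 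Worse, no alternative construction can rescue it under the strict reading: every exposed value has $p_v<t$, hence $f(p_v)=p_v(-\log p_v)>p_v(-\log t)$ and the exposed values contribute strictly more than $(-\log t)\,Q(t)=B$ to the entropy; the unexposed values carry mass exactly $r$ and, by subadditivity of the concave function $f$ with $f(0)=0$, contribute at least $f(r)=\frac1e$; so any $\probdist$ with $Q(t)=\frac{B}{-\log t}$ satisfies $H(\probdist)>B+\frac1e$. In other words, with the strict inequality the proposition is actually \emph{false} at this boundary point. The correct resolution is not a perturbation but the paper's (implicit) one: adopt the convention that values of probability exactly $t$ count toward $Q(t)$ (or state tightness in the limiting sense), and then your unperturbed construction finishes the proof.
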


\begin{proof}
Let \mbox{$n = \frac{B}{-t \log t}$}, and let \mbox{$\probdist \in \mathbb{R}^{n+1}$} be defined as:\\
\begin{align*}
p_i &= t  \quad \text{if } i \leq n \quad \text{ and} \quad 
p_{n+1} = 1 - nt= 1 + \frac{B}{\log t}. 
\end{align*}
Note that since \mbox{$1  + \frac{B}{\log t}  > t$}, it follows that \mbox{$Q(t) = \sum_{i=1}^n p_i = nt = -\frac{B}{\log t}$}. 

On the other hand, the entropy of this distribution is given by: 
\begin{equation*}
    -\sum_{i=1}^n p_i \log p_i - p_{n+1}\log p_{n+1}
     = n t \log t - p_{n+1} \log p_{n+1}
     = B - p_{n+1} \log p_{n+1}
\end{equation*}
The result of the proposition follows from the fact that the function \mbox{$x \mapsto -x \log x$} achieves a maximum value of \mbox{$\frac{1}{e}$} in $[0,1]$. 
\end{proof}

\end{document}